\documentclass[ba,preprint]{imsart}

\RequirePackage{amsthm,amsmath,amsfonts,amssymb}
\RequirePackage[]{natbib} 
\RequirePackage[colorlinks,citecolor=blue,urlcolor=blue,backref=page,backref=page]{hyperref}
\RequirePackage[pdftex]{graphicx}

\pubyear{2026}
\arxiv{2606.00231}
\volume{TBA}
\issue{TBA}
\firstpage{1}
\lastpage{1}

\startlocaldefs
\theoremstyle{plain}

\newtheorem{theorem}{Theorem}[section]
\newtheorem{lemma}[theorem]{Lemma}
\newtheorem{corollary}[theorem]{Corollary} 
\theoremstyle{definition}
\newtheorem{definition}[theorem]{Definition}

\theoremstyle{remark}

\endlocaldefs

\usepackage{float}
\usepackage{booktabs}
\usepackage[utf8]{inputenc}
\usepackage[normalem]{ulem}
\usepackage{numprint}
\usepackage{multirow}
\usepackage{bbm}
\usepackage{comment}

\newcommand{\cites}[1]{\citeauthor{#1}'s \citeyearpar{#1}}

\newcommand{\out}{_\text{out}}

\newcommand{\true}{_\text{true}}

\DeclareMathOperator{\E}{E}

\DeclareMathOperator{\KL}{KL}

\begin{document}

\begin{frontmatter}
\title{On Asymptotic Outlier Rejection in Bayesian Mixed Poisson Regression Models Under Extreme Target and Covariate Values}

\runtitle{Bayesian Regression Models of Counts and Extreme Outliers}

\begin{aug}
\author[A]{\fnms{Ilaria}~\snm{Pia}\ead[label=e1]{ilaria.pia@helsinki.fi}\orcid{0009-0008-7983-2163}}
\and
\author[A,B]{\fnms{Jarno}~\snm{Vanhatalo}\ead[label=e2]{jarno.vanhatalo@helsinki.fi}\orcid{0000-0002-6831-0211}}

\address[A]{Department of Mathematics and Statistics, 
University of Helsinki\printead[presep={,\ }]{e1}}

\address[B]{Organismal and Evolutionary Biology Research Programme, 
University of Helsinki\printead[presep={,\ }]{e2}}
\runauthor{I. Pia and J. Vanhatalo}
\end{aug}

\begin{abstract}

Bayesian models are defined to be fully robust against outliers if observations infinitely far from the other data do not influence the posterior. 
In regression models, this entails a need to consider outliers in both target and covariate values.
While in linear regression these cases are interchangeable, as both lead to anomalously large residuals, it has remained unclear whether this symmetry applies to generalized linear models.
Moreover, only recently, the theoretical understanding of generalized linear models' robustness to outliers in target values has progressed significantly.
Importantly, \citet{Hamura:2025} presented sufficient conditions for mixed Poisson count regression models to be robust against infinitely large target values and proposed a mixed Poisson-Rescaled Beta model fulfilling these conditions. 
We continue from their work and study the robustness properties of mixed Poisson regression models with Gaussian latent variables in the presence of outliers in covariates. 
We show that in count regression the symmetry between covariate and target outliers breaks: mixed Poisson models are not robust to outlier covariates even if they were robust to target outliers.
Furthermore, we show that, as a covariate gets infinitely large, the corresponding regression coefficient posterior collapses to a point-mass distribution concentrated around zero.
We then summarize the theoretical asymptotic outlier rejection properties of Gamma, log-Student's-$t$, and Rescaled Beta mixed Poisson models in the presence of outliers in either target or covariate values.
We also study the properties of these three mixed Poisson models in the presence of moderate outliers with simulations and a real world case study. 
Experimental results indicate that, despite not being asymptotically robust to covariate outliers, all mixed Poisson models are less sensitive to moderate outliers than (non-mixed) Poisson.
With this work, we provide theoretically and practically valuable understanding of the robustness properties of count regression models and pinpoint methodological development needs.

\end{abstract}

\begin{keyword}
\kwd{Count Regression}
\kwd{Robust Bayesian Inference}
\kwd{High Leverage Points}
\end{keyword}

\end{frontmatter}

\section{Introduction}

Observations that strongly deviate from the bulk of the other data points, referred to as outliers, can have a large impact on inference and prediction under standard statistical models, leading to biases in the data analysis results. 
For this reason, the development of models that are robust to outliers, capable of automatically detecting, diminishing, and even rejecting, the effect of outliers on inference, has been a long lasting topic in Bayesian statistics \citep{DeFinetti:1961, Neyman:1971,OHagan+Pericchi:2012,Desgagne:2015}. 
The need for robust methods has grown further in recent years due to the increase in the size of datasets and heterogeneity of data sources, which makes the appearance of outliers increasingly more likely.
In this paper, we present a cohesive classification of different types of outliers in count regression and examine how mixed Poisson regression models, analyzed under the Bayesian framework, behave both in the presence of infinitely large outliers and in the presence of finite but anomalous observations arising from either corrupted target values or corrupted covariate values.

In Bayesian analysis, the presence of outliers generates a conflict between the outlier likelihood component counterpoised to the likelihood of the rest of the data together with the priors \citep{OHagan+Pericchi:2012}. 
The resolution of this conflict is typically expressed in terms of the posterior asymptotic behavior: a Bayesian model is generally claimed to be \emph{fully robust} when the posterior given all the data converges in distribution to the posterior given all but the outlier observations, as the outliers move infinitely far from the bulk of the rest of the data.  
Such posterior convergence is generally referred to as \emph{asymptotic outlier rejection}, as well as \emph{outlier-proneness} \citep[][]{OHagan:1979}.
It should be noted though that the concept of outlier-proneness was first introduced by \cite{Neyman:1971} to refer to the ability of the data generating process to generate extremely large data. As such, in a regression modeling context it only corresponds to robustness against outlying targets, but it cannot be used to refer to robustness against covariates, as these are not considered a product of a data generating process. In this work, we hence define robustness in terms of asymptotic outlier rejection.
Recent studies have provided sufficient conditions for the robustness of linear regression models to outlying target and covariate values \citep{Gagnon:2020, Desgagne:2021, Hamura:2023} as well as for the robustness of generalized linear Gamma \citep{Gagnon:2023} and count regression models \citep{Hamura:2025} to outlying target values.

Specifically, in regression models, an outlying data point arises as an anomalous combination of covariates and target values with respect to the rest of the data. 
Traditional examples are anomalous target values resulting from measurement or reporting errors in the data transcription. However, corrupted covariate observations, also known as high leverage points \citep{Chatterjee:1986}, are also possible. 
For example, in environmental sciences and spatial data analyses, satellite images and raster maps are routinely used to create covariate data.
These products occasionally contain large errors and biases arising from the inaccuracies in the satellite measurements and map production, which may then lead to a mismatch between a spatially referenced target values and covariates associated with them. 
In linear regression, both infinite target and infinite covariate lead symmetrically to infinite residual. We can hence define outliers in terms of their errors, without distinguishing between the source through which the data point presents an anomalous residual \citep{Desgagne:2021}. 
However, as will be shown here, in count regression, the likelihood behavior is not symmetric with respect to an infinite covariate and an infinite target. 
This non-symmetry has, however, been largely ignored in the literature since robustness of generalized linear count models has been studied only in the limit of infinite target values \citep{Hamura:2025}.
Here, we show that this asymmetry between target and covariate outliers has important implications to the posterior inference.
Moreover, in count regression, target values are limited to non-negative values. 
As suggested by \citet{Hamura:2025}, outliers formed by a zero count associated with large intensity could be framed as a zero-inflation problem. While zero-inflated count models have been extensively studied \citep{Cragg:1971,Lambert:1992} and the recent works by \citet{Hamura:2020} and \cite{Hamura:2025} have significantly advanced our understanding of the treatment of large outlying counts, we still lack a comprehensive understanding of how these models behave with different types of outliers originating from anomalous covariate values.

In this work, we classify outliers in count regression based on the corrupted source: the target (type-$y$) or the covariate (type-$x$).
We further classify outliers as large (L) if the corrupted value is very large, in extreme case $+\infty$, and small (S) if the corrupted value is very small, in the extreme case $-\infty$ for outliers of type-$x$.
Since, as will be shown here, the behavior of count models under type-$x$ outliers differs depending on whether the target value is 0 or positive, we further separate outliers of type-$x$ into two subgroups, based on this division.
We summarize the five alternative types of outliers considered in this work in Table \ref{tab:outlier_types}.
%
%
We then study asymptotic robustness to different types of outliers for a selection of mixed Poisson regression models with Gaussian-latent variables (GLVs), which are among the most common models used in the study of count data \citep{Rue+etal:2009,Wang+Blei:2018}. 
Together with Poisson and Negative Binomial (or more generally Poisson-Gamma mixtures), which are standard choices when modeling count data, we consider the Poisson-Rescaled Beta \citep[-RSB,][]{Hamura:2025} and the Poisson-log Student's-$t$ \citep[first introduced by][]{Gaver87} mixtures. The latter distribution can be seen as a more flexible form of Poisson GLMs with GLVs since a Student's-$t$ distribution implies heteroschedastic Gaussian noise for the Poisson rate parameter in the GLM.
We define robustness to infinitely large outliers of type-$y$ in count regression in asymptotic terms, refer to it as asymptotic outlier rejection, and rely on \cites{Hamura:2025} results to study these models' ability to handle such extreme data points.
Specifically, we test whether the Poisson-mixing distribution is right log regularly varying \citep[LRV, ][]{Desgagne:2015}, to handle large outliers of type-$y$. 
We further extend the definition of asymptotic outlier rejection to outliers of type-$x$, originated by infinite covariates, and provide a proof that mixed Poisson models are not robust to them.

We then show that, in the presence of infinite covariates, the regression coefficients posterior distribution collapses to a point mass distribution at zero, also known as Dirac delta. 
This phenomenon expresses similarity to Lindley's paradox \citep{Lindley:1957}: in the presence of an infinite covariate, the likelihood becomes zero in all the regression parameters' space, except for the case with zero valued corresponding coefficient, indicating the model is unable to properly represent the data. However, the regression coefficients posterior results in an extremely certain estimate, putting all the posterior mass at zero.

We further study mixed Poisson models' behavior in the presence of moderate outliers, generated by finite but anomalous target or covariates via simulations and a real-world case study. We measure robustness to moderate outliers in terms of the Kullback-Leibler (KL) divergence \citep{Kullback+Leibler:1951} of the posterior given the full dataset with respect to the posterior excluding outliers. 
Despite mixed Poisson models not being asymptotically robust against extreme corrupted covariates, in practice, they do perform much better than the baseline Poisson model for data which include moderate outliers. Generally, a Poisson-Gamma (Negative Binomial) model was equally well able to capture outlying data of type-$x$ as the mixed Poisson models that are robust to extreme large outliers of type-$y$.

The rest of the paper is structured as follows. 
In Section \ref{sec:count_models}, we define the family of Bayesian mixed Poisson models considered in our study (Poisson, Negative Binomial, Poisson-log Student-$t$, and Poisson-RSB mixture) and check the behavior of the tails of their mixing components. 
In Section \ref{sec:robustness_extreme_outliers}, we define count regression models' robustness to extreme outliers of different types in terms of asymptotic outlier rejection and present sufficient conditions for mixed Poisson with GLVs models to be robust to extreme outliers of type-$y$, following \cites{Hamura:2025}. We further show that mixed Poissons are not robust to extreme outliers of type-$x$, independently of the mixing component.
In Section \ref{sec:experiments}, we conduct experiments on the different mixed Poisson GLVs count models to investigate their robustness to moderate outliers of different type.
In Section \ref{sec:kara_sea}, we adopt the alternative mixed Poisson models to study the distribution of polar bears in the Kara Sea with respect to a selection of environmental covariates. The data include one moderate outlier of type-$x$, originating from a corrupted satellite image. We compare the models with and without the outlier and demonstrate that even one such outlier can have large impact on model results. 
We end with a synthesizing discussion and conclusions as well as suggestions for future research to resolve the challenge in Section~\ref{sec:discussion}.

\begin{table}[t]
    \centering
        \caption{The basic reasons for an observation, $\{y, x\}$, to be an outlier in count regression models with continuous and unbounded covariates and the resulting type of outlier, conceptualized by whether the outlying observed value is much smaller (S) or larger (L) than its "true" value $\{y_{\text{true}}, x_{\text{true}}\}$.}
    \begin{tabular}{ll|c|c}
    \toprule
     Corrupted value& Direction of error & Outlier type & Extreme case \\
    \midrule
    Target $y$ : &$y>>y_{\text{true}}$ & L-$y$ & $y\to+\infty$ \\ 
    Covariate $x$ : &$x >> x_{\text{true}}$ & L-$x;y>0$& $x\to +\infty$ \\
        & & L-$x;y=0$& $x\to +\infty$ \\
        & $x << x_{\text{true}}$  & S-$x;y>0$ & $x\to -\infty$ \\ %
        &   & S-$x;y=0$ & $x\to -\infty$ \\ 
        \bottomrule
    \end{tabular}
    \label{tab:outlier_types}
\end{table}

\section{Mixed Poisson count regression}\label{sec:count_models}
     
\subsection{Mixed Poisson regression models with Gaussian latent variables}

In count regression, the standard Poisson model is known to be sensitive to outliers and zero-inflation since the variance and expectation of a Poisson distribution are both equal to the rate parameter.
From a data generating process point of view, we can obtain a more robust distribution, by adding heteroschedasticity to the rate parameter of the Poisson distribution.
This is similar in nature to writing the Student's-$t$ as a scale mixture of Gaussian distributions in linear regression.
Adding a multiplicative varying component to the Poisson rate parameter following a positive-valued mixing distribution leads to a mixed Poisson distribution which is more flexible than a standard Poisson distribution. However, whether a mixed Poisson distribution is robust to outliers or not depends on the mixing distribution and the type of an outlier.

Hereafter we denote by $\mathcal{D}_{n+1} = \{\boldsymbol{d}_1,\dots, \boldsymbol{d}_{n+1}\}$ a set of regression data such that $\boldsymbol{d}_i = (y_i,\boldsymbol{x}_i)$, with $y_i\in \mathbb{N}_0$ being the target and $\boldsymbol{x}_i\in \mathrm{R}^p$ the vector of covariates for observation $i=1,\dots,n+1$. 
We assume observations are conditionally independent.
We consider the target expectation model $\E(y_i|\mathcal{D}_{\setminus i}, \boldsymbol{x}_{i}) = \lambda_i\eta_i$ with $\lambda_i = \exp(\textbf{x}_i^\top\boldsymbol{\beta} + \zeta_i)$, generalized regression model with logarithmic link function, where $\boldsymbol{\beta}$ is the $p$-dimensional vector of regression coefficients, with Gaussian prior, and $\zeta$ is a $q$-dimensional Gaussian random effect capturing 'regular' observations, while $\eta_i$ is a multiplicative random effect capturing outliers.
Specifically, we consider mixed Poisson regression models, such that
 \begin{align}\label{eq:Pois-mix-models}
     y_i|\lambda_i\eta_i&\sim \text{Poisson}(\lambda_i\eta_i)& \nonumber\\
     \eta_i&\sim \pi (\cdot) &\\
     \lambda_i& = \exp(x_i^\top\beta + \zeta_i) : \beta, \zeta_i\sim N(\cdot).& \nonumber
 \end{align}
Here, the mixed Poisson rate parameter is attained as the product of a noise coefficient $\eta_i\sim\pi$ and a rate parameter $\lambda_i>0$ that follows a log-Gaussian distribution. 
Assuming log Gaussian distribution for $\lambda$ is a common choice in count regression models. 
Note though, that all following results can be extended to mixed Poisson regression whose regression components follow any bounded continuous prior.

\subsection{Alternative mixed Poisson models}

Many mixed Poisson distributions have been proposed in the literature, and their properties have been studied, for example, by \cite{Karlis:2005}.
In this work, we compare four of them: Poisson, Negative Binomial (Poisson-Gamma), Poisson-log Student's-$t$ \citep{Gaver87}, and Poisson-Rescaled Beta (Poisson-RSB) \citep{Hamura:2025}. 
The first two are standard choices in the analysis of count data, the latter two are less common but heavier tailed alternatives.

For each of the mixed model considered, we study the behavior of the left and right tails of the multiplicative noise $\eta_i$. 
We check whether the right tail of the mixture distribution $\pi$ is log regularly varying \citep[LRV,][]{Desgagne:2015} and whether the mixing distribution diverges with, at most, a polynomial rate at the origin: 
\begin{align}
 \text{LRV right tail}: \hspace{0.5cm}&  \pi(\eta) \approx \eta^{-1}(\log \eta)^{-(1+b)}, \ b>0 & \text{as } \ \eta\to \infty, \label{eq:right_tail}  \\ %
 \text{Polynomial divergence at origin}: \hspace{0.5cm}&   \pi(\eta) \approx \eta^{a-1}, \ 0<a<1 & \text{as } \ \eta\to 0.   \label{eq:left_tail}
\end{align}
The symbol $\approx$ here denotes the asymptotic proportionality between functions; $f(x)\approx g(x)$ meaning $f(x)/g(x) = C<\infty$ for $x\to0 \text{ or }\infty$. 
The left and right tails behavior of the different Poisson-mixing distributions is shown in Figure \ref{fig:LRtails}.

\begin{figure}[t]
    \centering
    \includegraphics[width=.8\linewidth]{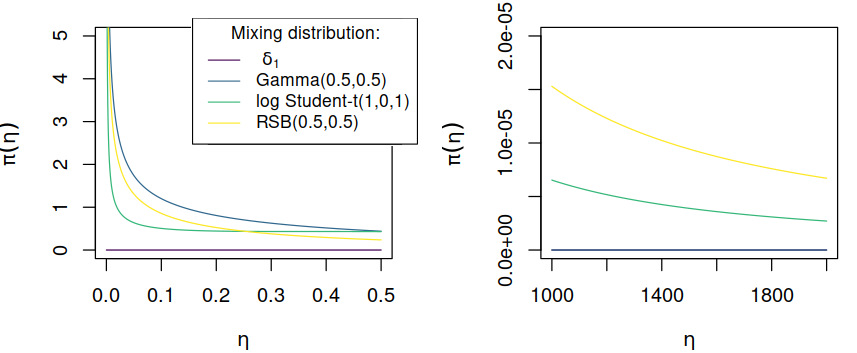}
    \caption{Behavior of the different Poisson-mixing distributions, $\eta_i\sim \pi$ (Section \ref{sec:count_models}), at the origin (left) and at the right tail (right). }
    \label{fig:LRtails}
\end{figure}

\subsubsection{Poisson model}
    
As a baseline model, we consider the Poisson process: $y_i|\lambda_i \sim \text{Pois}(\lambda_i)$. 
Assuming that data are generated from a Poisson process implies that the variability in data equals their average, as the Poisson rate parameter $\lambda$ serves both as a scale and as a location parameter: $\text{Var}[Y]=\E[Y]=\lambda$. Often, data expresses higher variability though. This leads to overdispersion and makes the Poisson distribution non-robust against outliers.

The Poisson model can be seen as a mixed Poisson with multiplicative noise $\pi = \delta_1$, the one point mass at 1. Clearly it is not heavy-tailed nor diverging at origin.

\subsubsection{Negative Binomial (Poisson-Gamma) model}

A common approach to better model overdispersed data is to introduce heteroschedasticity into the data generating process by assigning a Gamma prior for the mixing distribution, so that
\begin{align*}
y_i | \lambda_i, \eta_i & \sim \text{Poisson}(\lambda_i \eta_i) \label{eq_scale_mixture_1}\\
\eta_i|r & \sim \text{Gamma}(r, r). 
\end{align*}
Here, each observation has its own rate parameter $\lambda_i \eta_i$ following a Gamma prior distribution with shape $r>0$ and rate $\frac{r}{\lambda_i}>0$. Marginalizing over $\eta_i$ we obtain that $y_i$ follows a Negative Binomial distribution centered in $\lambda_i$ \citep{Greenwood+Yule:1920}:
\begin{equation*}
    y_i|\lambda_i,r \sim \text{NegBin}(\lambda_i,r).
\end{equation*}
%
The model converges to the Poisson distribution when its overdispersion parameter $r \to \infty$.

In the Negative-Binomial model, the multiplicative noise $\eta_i$ has density
\begin{equation}
    f_{\text{Gamma}}(x;r,r) \propto x^{r-1} e^{-x r}.
\end{equation}
As $x\to\infty$ the density converges to zero at a rate of $e^{-x r}$, hence it is not right LRV. However, as $x\to0$, $f_{\text{Gamma}}(x)\sim x^{r-1}$ which diverges polynomially for $r<1$.
In our experimental studies we set $r=1/2$.

\subsubsection{Poisson-log Student's-$t$ model}

The Poisson-log Student's-$t$ (shortly log-t hereafter) mixture model for count data, first proposed by \citet{Gaver87}, can be derived as follows
\begin{align*}
y_i | \lambda_i, \eta_i & \sim \text{Poisson}(\lambda_i \eta_i) \label{eq_scale_mixture_1}\\
\log\eta_i|k & \sim \text{Student-}t_k(\cdot). 
\end{align*}
where each observation has its own rate parameter $\lambda_i \eta_i$, and $\log\eta_i$ follows a Student's-$t$ prior distribution with $k$ degrees of freedom. As underlined by \citet{Gaver87} this mixture can be seen as a generalization of the logNormal-mixed Poisson \citep{Bulmer:1974} which allows for a better fit of extreme outliers, since the Student's-$t$ has systematically heavier tails than Gaussian.
Further, the log-t mixed Poisson can be seen as a more flexible form of Poisson GLMs with GLVs, and hence represents a natural choice for a more robust count model.
This model is also equivalent to considering a GLM for the rate parameters $\theta_i = \lambda_i\eta_i$, with additional heteroschedastic Gaussian noise.
Given $y_i\sim\text{Pois}(\theta_i = \lambda_i\eta_i)$, then:

\begin{tabular}{lcl}
\\
    $\log\lambda_i = x_i'\beta + \zeta_i$ & $\iff$ & $\log\theta_i = x_i'\beta + \zeta_i + \log\eta_i$ \\
    $\ \ \zeta_i\sim \text{N}(0,\sigma^2), \ \eta_i\sim \log-t(k,0,s^2)$ && $\ \ \zeta_i\sim \text{N}(0,\sigma^2), \ \log\eta_i\sim \text{N}(0,s_i^2)$\\
    && $\ \ s_i^2\sim \text{Inv}-\chi_k(\cdot)$.\\
\\    
\end{tabular}

The density of a log Student's-$t$ distribution with $k$ degrees of freedom is
\begin{equation}
    f_{\text{log-t}}(x;k) \propto \frac{1}{x}\Big( 1 + \frac{1}{k}\left(\frac{\log x}{s}\right)^2\Big)^{-\frac{k+1}{2}}
\end{equation}
As $x\to\infty$ the density $f_{\text{log-t}}(x)\sim x^{-1}\log x ^{-(k+1)}$ which is LRV with index $k+1$. The lower the LRV index, the fatter the tail; so setting the log-t degrees of freedom $k=1$ gives the fattest tailed distribution, also known as log-Cauchy distribution. As $x\to0$, $f_{\text{log-t}}(x)$ diverges at the rate of $x^{-1}|\log(x)|^{-k-1}$, slower than the polynomial rate \eqref{eq:left_tail}. 

\subsubsection{Poisson-Rescaled Beta (Poisson-RSB) model}
Following \citet{Hamura:2025}, we derive the Poisson-Rescaled Beta (Poisson-RSB) model as
\begin{align*}
y_i | \lambda_i, \eta_i & \sim \text{Poisson}(\lambda_i \eta_i) \label{eq_scale_mixture_3}\\
\eta_i|a,b & \sim \text{RSB}(a,b) 
\end{align*}
where the parameter $\eta_i>0$ follows a rescaled beta distribution, with density: 
\begin{equation}
    f_{RSB}(x;a,b) = \frac{1}{B(a,b)} \frac{\log(1+x)^{a-1}}{1+x} \frac{1}{[1+\log(1+x)]^{a+b}}. 
\end{equation}
A RSB distributed random variable arises as a scale transformation of a beta-distributed random variable: if $U\sim \text{Beta}(a,b)$ then $X = \log(1 + \frac{U}{1-U})\sim \text{RSB}(a,b)$.
The Poisson-RSB mixture can also be seen as an extension of the Poisson-Gamma mixture since it can be obtained as the marginal distribution of the following hierarchical model: $\eta|r\sim \text{Gamma}(1,r),\ r|u,v\sim \text{Gamma}(u+v,1),\ (u,v)\sim g(u,v)$, bivariate joint density such that $g(v,w)\propto\frac{v^{-a}w^{b+a-1}e^{-w}}{v+w}$ \citep[][Theorem 1]{Hamura:2025}.

The rescaled beta has a right LRV density, such that $f_{\text{RSB}}(x) \approx x^{-1} (\log x)^{-(1+b)}$ as $x\to\infty$. With $0<a<1$, its density diverges at the origin at the polynomial rate of $x^{a-1}$ \citep{Hamura:2025}.
Following Hamura et al.'s suggestion, we set the RSB parameters $a=b=1/2$ in this work.

\section{Mixed Poisson regression under extreme outliers}\label{sec:robustness_extreme_outliers}

\subsection{Linear regression models' robustness to extreme outliers}

As a conceptual background for the later sections, we first summarize the definition of robust linear regression models. 
Robustness against outliers in Bayesian statistics has traditionally been studied in terms of asymptotic outlier rejection.
Based on \cites{Neyman:1971} and \cites{Dawid:1973} work, \citet{OHagan:1979} considered real valued observations $y_1,\dots,y_{n+1}$ and distributions of the form $g(y-\theta)$, with $\theta$ location parameter taking values on the full real line, such as Gaussians, and framed outlier robustness in asymptotic terms, focusing on the posterior distribution of the location parameter when outlying observations approach infinite. 

\begin{definition}{(Asymptotic outlier rejection / Outlier proneness \cite{OHagan:1979})}\label{def:outlierprone}
    Let the observations $y_1, y_2, \dots, y_{n+1}$ be i.i.d. given $\theta$ with densities $p(y_i|\theta)=g(y_i-\theta)$. 
    Then such model is \emph{right (left) outlier prone} of order $n$ if
$$p(\theta|y_1,\dots,y_{n+1})\to p(\theta|y_1,\dots,y_n) \text{ as } y_{n+1}\to +\infty (-\infty).$$

\end{definition}

That is, the effect of a single conflicting observation to the posterior becomes asymptotically negligible as the observation approaches infinity. Right and left outlier-proneness are defined symmetrically, as $y\to +\infty$ and  $y\to -\infty$ respectively. 
As shown by O'Hagan, while Gaussian distributions are not outlier prone, Student's-$t$ distributions, obtained by adding heteroschedasticity to the Gaussian observation model, are outlier prone for the location parameter. 
However, later studies have shown how, in linear regression when both regression coefficients and scale parameter are unknown, Student's-$t$ models are not fully robust, as the convergence defining outlier-proneness is only satisfied for location parameters \citep{Gagnon:2023_st}. 
O'Hagan's definition considers i.i.d. observations, but can easily be extended to linear regression models with densities $p(y_i|\theta)=g(y_i-x_i\beta)$. Note that in linear regression, outliers of type-$y$ (infinite target) and outliers of type-$x$ (infinite covariate) lead to symmetric cases. Indeed, we can identify extreme outliers in terms of infinite error $|y-x\beta|$ \citep{Desgagne:2021}.

\subsection{Defining count regression model's robustness to extreme outliers}\label{sec:OP_definitions}
Count data models, such as the mixed Poisson models considered here, do not fit into the location-scale family summarized above. Count processes have non-negative support, so we cannot define left outlier proneness similarly to models with support in real line. 
Moreover, count models depend on a positive rate parameter introducing asymmetry between outliers of type-$y$ and type-$x$ as shown below.

Hereafter we consider the mixed Poisson models introduced in Section~\ref{sec:count_models}, and formally define their robustness to large outliers of type-$y$ as well as small and large outliers of type-$x$ (see Table \ref{tab:outlier_types}) in terms of asymptotic outlier rejection. 
We reformulate \cites{OHagan:1979} definition of asymptotic right outlier rejection  as follows.
\begin{definition}{(Asymptotic rejection of large outliers of type-$y$)}    \label{def:rightOP_typey_countproc}
A mixed Poisson model performs \emph{asymptotic rejection of large outliers of type-$y$} and order $n$ if
$$p(\beta,\zeta|y_1,\dots,y_{n+1})\to p(\beta,\zeta|y_1,\dots,y_n) \text{ as } y_{n+1}\to + \infty.$$
\end{definition}

The above is equivalent to stating that a mixed Poisson model is \emph{right outlier prone} of order $n$. That is, a count regression model performs asymptotic large outlier rejection of order $n$, if the posterior distribution of the regression model parameters is not affected by one extremely large count. 

Since counts are bounded to non-negative values, we do not attempt to define outlier rejection to small counts in asymptotic terms.
Instead, we study asymptotic rejection of outliers presenting a small count under the concept of large type-$x$ outliers. 
Moreover, it should be noted that the concept of outlier-proneness was first introduced by \cite{Neyman:1971} to refer to the ability of a data generating process to produce extreme observations. It is hence  appropriate for defining asymptotic rejection of outlying targets, but cannot be used to define robustness against outlying covariates, as these are not considered a product of a data generating process. 
However, we can define robustness to outliers with infinite covariates in terms of asymptotic outlier rejection, where an observation $d_{n+1}$ is an extreme outlier of type-$x$ if its covariate vector $x_{n+1}\in\mathbb{R}^p$ is such that for one $j\in \{1,\dots,p\}$ $x_{j,n+1}\to\pm\infty$.
\begin{definition}{(Asymptotic rejection of large (small) outliers of type-$x$)}    \label{def:righlefttOP_typex_countproc}
A mixed Poisson model performs \emph{asymptotic rejection of large outliers of type-$x$} and order $n$ if
$$p(\beta,\zeta|y_1,\dots,y_{n+1})\to p(\beta,\zeta|y_1,\dots,y_n) \text{ as } x_{j,n+1}\to + \infty.$$
Similarly, a mixed Poisson model performs \emph{asymptotic rejection of small outliers of type-$x$} and order $n$ if
$$p(\beta,\zeta|y_1,\dots,y_{n+1})\to p(\beta,\zeta|y_1,\dots,y_n) \text{ as } x_{j,n+1}\to - \infty.$$
\end{definition}

We can extend the above definitions to multiple outliers, by substituting the $n+1$ outlying observation with a set of observations with either infinite or zero target.
Note also that, to avoid notational clutter, we have suppressed the covariates from the conditioning factors of the distributional statements and, as typically done, denoted data only by the target variables $y_1,\dots,y_{n+1}$.

\subsection{Robustness of mixed Poisson regression models to extreme outliers of type-$y$}

Early studies of extreme value theory have shown how the right tails of certain classes of mixed Poisson distributions are asymptotically equal to the right tail of their mixing distribution \citep{Willmot:1990, Perline:1998}.
To study robustness of different mixed Poisson models with GLVs we hence focus on the tail behavior of the mixing component $\eta$. As shown by \citet{Hamura:2025}, having a LRV right tail is a sufficient conditions for such models' robustness to large outliers of type-$y$.

Let us consider the mixed Poisson regression model from Equation~\eqref{eq:Pois-mix-models}, with $y_i$ count response variables for $i=1,\dots,n$ such that $y_i = y_i (\omega) \to \infty$ as $\omega\to\infty$ for $i \in L$ and $y_i$ fixed for $i \in K$, with $K,\ L$ non empty index sets such that $|K| + |L|=n$.
We further assume that $|K^+| + 1 \ge |L| + p$, where $K^+ = \{i \in K | y_i \ge 1\}$ and $p$ is the dimension of $\beta$. 
As proved by \citet[][Theorem 3, reproduced below]{Hamura:2025}, under the assumption that $k+1\ge l+p$, a mixed Poisson asymptotically rejects large type-$y$ outliers if the mixing distribution $\pi$ follows the RSB distribution.

\begin{theorem}{\citep[Theorem 3][]{Hamura:2025}}\label{Th:Pois-RSB_rightOP} 
Suppose that $\pi(\eta_i)$ is a RSB distribution with $k$ degrees of freedom, $\pi(\zeta,\beta) = \pi(\zeta)\pi(\beta)$, $\pi(\zeta)$ is normal, and $\pi(\beta)$ is a proper probability density. Then, we have that $p(\beta, \zeta |y_K,y_L) \to p(\beta, \zeta |y_K )$ as $\omega \to \infty$.
\end{theorem}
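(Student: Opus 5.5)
The plan is the standard route for outlier rejection results: write the posterior as a ratio and prove convergence of both numerator and normalizing constant by dominated convergence, after rescaling the outlier likelihood factors. For each $i\in L$ the marginal mixed Poisson likelihood is
\[
p(y_i\mid\lambda_i)=\int_0^\infty \frac{(\lambda_i\eta)^{y_i}e^{-\lambda_i\eta}}{y_i!}\,\pi(\eta)\,d\eta
=\frac{1}{\lambda_i}\int_0^\infty \frac{t^{y_i}e^{-t}}{y_i!}\,\pi(t/\lambda_i)\,dt ,
\]
with $\lambda_i=\exp(x_i^\top\beta+\zeta_i)$. The first step is to show pointwise in $(\beta,\zeta)$ that $p(y_i\mid\lambda_i)/\pi(y_i)\to 1$ as $y_i\to\infty$. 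Since $t^{y_i}e^{-t}/y_i!$ is a Gamma$(y_i+1,1)$ density concentrating around $y_i$ with relative spread $O(y_i^{-1/2})$, this reduces to $\pi(t/\lambda)/\pi(t)\to\lambda$ uniformly for $t$ near $y_i$. That in turn follows from the LRV right tail of RSB, since $\pi(x)\approx x^{-1}(\log x)^{-(1+b)}$ and $\log(t/\lambda)/\log t\to1$. The contribution of $t$ far from $y_i$ has to be shown negligible by Chernoff-type bounds on the Gamma density. The small-$t$ region needs care because $\pi$ diverges like $\eta^{a-1}$ at the origin. This is integrable, and the Poisson kernel factor $t^{y_i}$ kills it.

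Dividing the joint posterior kernel by $\prod_{i\in L}\pi(y_i)$, the numerator becomes $\pi(\beta)\pi(\zeta)\prod_{i\in K}p(y_i\mid\lambda_i)\prod_{i\in L}p(y_i\mid\lambda_i)/\pi(y_i)$, which converges pointwise to $\pi(\beta)\pi(\zeta)\prod_{i\in K}p(y_i\mid\lambda_i)$. It then suffices to show that the normalizing constants converge, i.e.\ to find an integrable dominating function. Scheffé's lemma then upgrades this to convergence of the posterior densities, and hence convergence in distribution.

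The main obstacle is this domination step. I would aim for a uniform bound of the form
\[
\frac{p(y\mid\lambda)}{\pi(y)}\le C\,\max(\lambda,\lambda^{-1})^{c}\,\bigl(1+|\log\lambda|\bigr)^{1+b}
\quad\text{for all large } y,
\]
obtained by splitting the $t$-integral into $t\le y/2$, $y/2<t<2y$ and $t\ge2y$, and by using $\pi(t/\lambda)\lesssim \lambda t^{-1}(1+|\log\lambda|)^{1+b}(\log t)^{-(1+b)}$ in the middle range. Such a bound grows only exponentially in $|x_i^\top\beta+\zeta_i|$. For the $\zeta_i$ direction this is absorbed by the Gaussian prior on $\zeta$. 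The $\beta$ direction is only known to have a proper prior, so it must instead be controlled by the non-outlying likelihood. Each $i\in K^+$ contributes a factor decaying in $\lambda_i$ both as $\lambda_i\to0$ (like $\lambda_i^{\min(a,y_i)}$ from the origin behaviour of $\pi$) and as $\lambda_i\to\infty$ (like $(\log\lambda_i)^{-(1+b)}/\lambda_i$). The condition $|K^+|+1\ge|L|+p$ is what should guarantee, after a change of variables to the linear combinations $x_i^\top\beta+\zeta_i$, that these decaying factors outweigh the $|L|$ growing outlier factors in every direction of $\beta$-space. I would verify this counting argument carefully, possibly following Hamura et al.'s proof structure. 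Given the dominating function, dominated convergence completes the proof.
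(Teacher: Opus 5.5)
Your architecture is the paper's. The paper only cites Hamura et al.\ for this theorem, but its proof of the log-$t$ corollary shows the intended route, via their Lemmas S1 and S4. You divide each outlier factor by a $y$-only quantity: you use $\pi(y_i)$, the paper uses $f_\pi(y_i;0)=p(y_i\mid\lambda=1)$, and the two are asymptotically equivalent. You then get the pointwise limit from the LRV tail (your argument there is fine) and conclude by dominated convergence.

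\textbf{The gap is in the domination step.} It comes from a concrete error: the decay rates you assign to the non-outlier factors are reversed. For fixed $y\ge1$:
\begin{itemize}
\item Letting $\lambda\to\infty$ forces $\eta\approx y/\lambda\to0$. The behaviour of $\pi$ at the origin then governs, and $p(y\mid\lambda)\asymp\lambda^{-a}$.
\item Letting $\lambda\to0$ forces $\eta\approx y/\lambda\to\infty$. The LRV tail then governs, and $p(y\mid\lambda)\asymp(\log(1/\lambda))^{-(1+b)}$.
\end{itemize}
In terms of the linear predictor $z_j=x_j^\top\beta+\zeta_j$, a non-outlier factor therefore decays only polynomially, like $|z_j|^{-(1+b)}$, as $z_j\to-\infty$.

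Your outlier bound contains $\max(\lambda,\lambda^{-1})^{c}=e^{c|z_i|}$. With $c>0$ this cannot be compensated along $\beta$-directions in which the non-outlier predictors tend to $-\infty$; this already happens in an intercept-only model as $\beta\to-\infty$. Since $\pi(\beta)$ is only assumed proper and may have polynomial tails, the Gaussian prior on $\zeta$ does not help. The resulting envelope is not integrable, and no counting condition on $|K^+|$, $|L|$, $p$ can repair it.

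\textbf{What you need is the sharper uniform estimate with $c=0$:} $\sup_{y\ge y_0}p(y\mid\lambda)/\pi(y)\le C(1+|\log\lambda|)^{1+b}$, and in fact $\le C$ for $\lambda\le1$.
\begin{itemize}
\item Your middle-range bound on $\pi(t/\lambda)$ already gives exactly this.
\item The outer ranges are exponentially small in $y$, provided you never bound $\pi(t/\lambda)$ by $(t/\lambda)^{a-1}$ outside $t\le\lambda$. Doing so is how a spurious factor $\lambda^{-a}$ would enter.
\end{itemize}

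With this estimate, each outlier factor grows like $(1+|z_i|)^{1+b}$ and each $K^+$ factor decays like $(1+|z_j|)^{-(1+b)}$, with the \emph{same} exponent. This exact matching is what the LRV tail buys.

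Split off $\zeta$ using $1+|u|\le(1+|u+v|)(1+|v|)$. This leaves polynomial factors in $\zeta$, which the normal prior absorbs, together with the ratio
$\prod_{i\in L}(1+|x_i^\top\beta|)^{1+b}/\prod_{j\in K^+}(1+|x_j^\top\beta|)^{1+b}$.
This ratio must be \emph{bounded} in $\beta$, because $\pi(\beta)$ is merely proper.

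Assume the covariates are in general position, which the counting condition implicitly presupposes. Along any direction, at most $p-1$ of the $x_j$, $j\in K^+$, are orthogonal to it. So at least $|K^+|-p+1\ge|L|$ denominator factors grow. This is exactly where $|K^+|+1\ge|L|+p$ enters. Your exponential-versus-exponential heuristic would not have produced that condition.
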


The key property making the Poisson-RSB model robust against extremely large counts is the very heavy, LRV, right tail of the RSB mixing distribution.
Note that the Poisson log-t mixing distribution also has a LRV right tail (Section \ref{sec:count_models}), and is hence robust to extremely large outliers of type-$y$.
We formally present Poisson-log-t right-OP property in the following corollary, and provide its formal proof in the Supplement, Section 1.1 \citep{Pia+Vanhatalo:2026}. 

\begin{corollary}\label{Th:Pois-logt_rightOP} 
Suppose that $\pi(\eta_i)$ is a log-t distribution with $k$ degrees of freedom, $\pi(\zeta,\beta) = \pi(\zeta)\pi(\beta)$, $\pi(\zeta)$ is normal, and $\pi(\beta)$ is a proper probability density. Then, we have that $p(\beta, \zeta |y_K,y_L) \to p(\beta, \zeta |y_K )$ as $\omega \to \infty$.
\end{corollary}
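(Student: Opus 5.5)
The plan is to follow the proof strategy of \citet[][Theorem 3]{Hamura:2025} (reproduced as Theorem~\ref{Th:Pois-RSB_rightOP}). That argument uses only two properties of the mixing density: a log regularly varying right tail, and integrability/boundedness near the origin. We will check that the log-t density has both, and then rerun the dominated convergence argument.

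\textbf{Step 1: tail behaviour of the marginal.} Write the marginal likelihood of an outlying count as $p(y_i\mid\theta_i)=\int_0^\infty \text{Pois}(y_i;\lambda_i\eta)\,\pi(\eta)\,d\eta$, with $\lambda_i=\exp(\theta_i)$ and $\theta_i=x_i^\top\beta+\zeta_i$. Substituting $u=\lambda_i\eta$ gives
\begin{equation*}
p(y_i\mid\theta_i)=\int_0^\infty \frac{u^{y_i}e^{-u}}{y_i!}\,\lambda_i^{-1}\pi(u/\lambda_i)\,du .
\end{equation*}
As $y_i\to\infty$ the Gamma kernel concentrates at $u\approx y_i$. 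We use the LRV form $\pi(\eta)\approx \eta^{-1}(\log\eta)^{-(k+1)}$ from Section~\ref{sec:count_models}. Then $\lambda_i^{-1}\pi(u/\lambda_i)\approx u^{-1}(\log u-\theta_i)^{-(k+1)}$, so
\begin{equation*}
\frac{p(y_i\mid\theta_i)}{y_i^{-1}(\log y_i)^{-(k+1)}}\to 1
\end{equation*}
pointwise in $\theta_i$. The limit is free of $(\beta,\zeta)$. This is exactly the property used in Hamura et al.'s proof, with $b=k$.

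\textbf{Step 2: a uniform bound.} We need an integrable envelope for the ratio above, uniform in $\omega$. The aim is a bound of the form
\begin{equation*}
\frac{p(y_i\mid\theta_i)}{y_i^{-1}(\log y_i)^{-(k+1)}}\le C\,(1+|\theta_i|)^{k+1}\,\max\{1,e^{-\theta_i}\}.
\end{equation*}
The plan is to split the $u$-integral into three regions:
\begin{itemize}
\item $u<\sqrt{y_i}$, which is exponentially negligible by Chernoff bounds on the Gamma kernel;
\item $u\in[\sqrt{y_i},2y_i]$, where we use Potter-type bounds. These follow from $(1+(\log\eta)^2/k)^{-(k+1)/2}\le C(1+|\log u|)^{-(k+1)}(1+|\theta_i|)^{k+1}$ via the inequality $1+|a|\le(1+|a-b|)(1+|b|)$;
\item the tail $u>2y_i$, also negligible.
\end{itemize}
For the near-origin part, Section~\ref{sec:count_models} shows that the log-t density diverges more slowly than the polynomial rate \eqref{eq:left_tail}. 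Hence $\int_0^1\pi<\infty$ with no extra singularity, and the factor $\pi(u/\lambda_i)\lambda_i^{-1}$ for large $\lambda_i$ is controlled by $\lambda_i^{-1}\sup\pi\cdot$ an adjustment near zero. The $(\log)$ singularity there is where care is needed.

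\textbf{Step 3: dominated convergence.} Write the posterior as
\begin{equation*}
p(\beta,\zeta\mid y_K,y_L)=\frac{\pi(\beta)\pi(\zeta)\prod_{K}p(y_i\mid\theta_i)\prod_{L}\big[p(y_i\mid\theta_i)/c(y_i)\big]}{\text{normaliser}},
\end{equation*}
where $c(y)=y^{-1}(\log y)^{-(k+1)}$. By Step 1 the $L$-factors converge pointwise to $1$. We then show the envelope from Step 2, multiplied by the $K$-likelihood and the priors, is integrable. The $(1+|\theta|)^{k+1}$ factors are harmless against the Gaussian prior on $\zeta$ and the $K$-likelihood. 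The $e^{-\theta_i}$ factors are absorbed by the $K^+$ observations. Each $y_i\ge1$ gives a factor bounded by $C\min\{1,\lambda_i\}$, which kills $e^{-\theta}$ growth in directions of $\beta$ spanned by those covariates. This is where the condition $|K^+|+1\ge|L|+p$ enters, exactly as in Hamura et al. Dominated convergence applied to numerator and normaliser then gives pointwise convergence of the posterior density. Scheffé's lemma upgrades this to convergence in total variation, hence in distribution.

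\textbf{Main obstacle.} I expect Step 2 to be the hardest part, specifically making the Potter-type bound uniform when $\theta_i$ is large and negative, since then $u/\lambda_i$ is huge even for moderate $u$. The first thing to try is to redo Hamura et al.'s bound for RSB with $\log(1+x)$ replaced by $|\log x|$. The log-t slowly varying part $(1+(\log x)^2/k)^{-(k+1)/2}$ is symmetric in $\log x$ and differs from the RSB tail only by bounded factors on $x\ge1$. For $x<1$ it is even better behaved than the RSB $x^{a-1}$ divergence, so the integrability arguments should carry over with constants changed.
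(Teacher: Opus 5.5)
Your skeleton matches the paper's proof. You check that the log-t tail is right LRV with index $k+1$, which gives the pointwise tail equivalence of Hamura et al.'s Lemma S1. You then apply dominated convergence, which the paper does not verify and simply attributes to their Lemma S4. (Normalising by $p(y\mid\theta=0)$, as the paper does, makes the pointwise limit exactly $1$; yours tends to the log-t tail constant instead, which is harmless.)

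The genuine gap is in Steps 2--3. Your envelope carries a factor $\max\{1,e^{-\theta_i}\}$, and you absorb it by claiming that a positive count contributes $p(y_i\mid\theta_i)\le C\min\{1,\lambda_i\}$. That is false for any LRV mixing density. Restricting the $\eta$-integral to $[y/(2\lambda),\,2y/\lambda]$ gives $p(y\mid\theta)\ge c\,|\theta|^{-(k+1)}$ as $\theta\to-\infty$: positive counts have likelihoods with only polynomial tails in $\theta$, which is exactly what makes the model robust. Since $\pi(\beta)$ is only assumed proper, $\int e^{-x^\top\beta}\pi(\beta)\,d\beta$ can diverge. So your dominating function is not shown to be integrable, and the role of $|K^+|+1\ge|L|+p$ is attributed to a mechanism that does not exist.

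Relatedly, the log-t is \emph{not} better behaved than RSB at the origin. The rate $\eta^{-1}|\log\eta|^{-(k+1)}$ diverges faster than $\eta^{a-1}$ for every $a>0$; Section~\ref{sec:count_models} states this comparison the wrong way round. Hence $\sup\pi=\infty$, so your ``$\lambda_i^{-1}\sup\pi$'' control is unavailable. Moreover, a positive count's likelihood decays only like $\theta^{-(k+1)}$ as $\theta\to+\infty$, not like $e^{-a\theta}$ as under RSB. RSB-based domination bounds therefore do not transfer ``with constants changed''.

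The repair is to drop the exponential factor, which is an artifact of a crude bound. Because $\log\eta$ is a location family, $p(y\mid\theta)=\int\text{Pois}(y\mid e^{v})\,f_k(v-\theta)\,dv$, with $f_k$ the Student-$t$ density of $\log\eta$. Peetre's inequality $1+|v|\le(1+|v-\theta|)(1+|\theta|)$ then gives $p(y\mid\theta)\le C(1+|\theta|)^{k+1}p(y\mid 0)$ uniformly in $y$ and $\theta$. This needs no splitting of the $u$-integral, no Chernoff bounds, and no special treatment of negative $\theta$.

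The same inequality used the other way gives $p(y\mid\theta)\le C_y(1+|\theta|)^{-(k+1)}$ for $y\ge1$, because $\int\text{Pois}(y\mid e^{v})(1+|v|)^{k+1}dv<\infty$; zero counts are bounded by $1$. Integrability is then a power count with matching exponents $k+1$. After separating $\zeta_i$ by the same inequality, the Gaussian $\pi(\zeta)$ absorbs the resulting polynomial factors in $\zeta$. The condition $|K^+|+1\ge|L|+p$ guarantees that, in every direction of $\beta$, the decaying $K^+$ factors are at least as many as the $|L|$ growing ones (at most $p-1$ can be flat, for covariates in general position). The product is then bounded, and properness of $\pi(\beta)$ finishes the argument.
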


Rejection of small outliers of type-$y$ cannot be defined in asymptotic terms since $y_i$ is bounded by zero from the left. Thus, \cite{Hamura:2025} studied the behavior of the posterior of the Poisson-RSB mixing component at the origin in the presence of zero inflation and asymptotic increase of the (true) rate parameter: 

\begin{theorem}{{\citep[The first part of Theorem 2 of][]{Hamura:2025}}}\label{theom:Hamura_eta}
For any $\epsilon>0$, the posterior probability $Pr(\eta_i <\epsilon|y_i = 0; a, b)$ is a nondecreasing function of $\lambda_i$. 
\end{theorem}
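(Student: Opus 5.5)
The plan is to write the posterior of $\eta_i$ given $y_i=0$ and $\lambda_i$ explicitly and show that it forms a family that is monotone in $\lambda_i$ in the likelihood-ratio sense. By Bayes' rule and the Poisson likelihood at $y_i=0$,
\[
p(\eta_i\mid y_i=0,\lambda_i;a,b)=\frac{e^{-\lambda_i\eta_i}\,\pi(\eta_i)}{\int_0^\infty e^{-\lambda_i t}\,\pi(t)\,dt},
\]
with $\pi$ the RSB$(a,b)$ density. The normalizing constant is the Laplace transform of $\pi$ at $\lambda_i$. It is finite and positive for every $\lambda_i>0$, because $e^{-\lambda_i t}\le 1$ and $\pi$ is a proper density. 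The only role of the RSB form is to guarantee that $\pi$ is a proper density on $(0,\infty)$. The monotonicity argument itself does not use the specific shape of $\pi$.

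Fix $0<\lambda<\lambda'$. The ratio of the two posterior densities is
\[
\frac{p(\eta\mid 0,\lambda')}{p(\eta\mid 0,\lambda)}\propto e^{-(\lambda'-\lambda)\eta},
\]
which is strictly decreasing in $\eta$. So the posterior under $\lambda'$ is smaller than the posterior under $\lambda$ in the monotone likelihood ratio order. Likelihood-ratio dominance implies first-order stochastic dominance. Hence for every $\epsilon>0$ we get $\Pr(\eta_i<\epsilon\mid y_i=0,\lambda')\ge \Pr(\eta_i<\epsilon\mid y_i=0,\lambda)$, which is the claim.

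To keep the step from likelihood ratio order to stochastic order self-contained, I would prove it directly. Write $F_\lambda(\epsilon)=\int_0^\epsilon e^{-\lambda t}\pi(t)\,dt\big/\int_0^\infty e^{-\lambda t}\pi(t)\,dt$. There are two routes.

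\emph{Derivative route.} Differentiating in $\lambda$ under the integral sign, which dominated convergence justifies for $\lambda>0$ since $t e^{-\lambda t}$ is bounded, gives
\[
\partial_\lambda F_\lambda(\epsilon)=F_\lambda(\epsilon)\Big(\E_\lambda[\eta]-\E_\lambda[\eta\mid \eta<\epsilon]\Big).
\]
The bracket is nonnegative, because conditioning on $\{\eta<\epsilon\}$ can only lower the mean.

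\emph{Chebyshev route.} Equivalently, one can use Chebyshev's association inequality: $\E[\mathbbm{1}\{\eta<\epsilon\}\,\eta]\le \E[\mathbbm{1}\{\eta<\epsilon\}]\,\E[\eta]$, since the indicator is decreasing and $\eta$ is increasing in $\eta$.

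The main technical point is finiteness of $\E_\lambda[\eta]$. The RSB right tail is only log-regularly varying, so $\pi$ itself has no mean. The factor $e^{-\lambda\eta}$ restores all moments whenever $\lambda>0$, so the derivative argument is valid. If differentiability is a concern, I would fall back on the direct two-point comparison. Cross-multiplying reduces the claim $F_{\lambda'}(\epsilon)\ge F_\lambda(\epsilon)$ to
\[
\int_0^\epsilon\!\!\int_\epsilon^\infty \pi(s)\pi(t)\big(e^{-\lambda' s-\lambda t}-e^{-\lambda s-\lambda' t}\big)\,dt\,ds\ge 0 .
\]
This holds because $s<t$ implies $(\lambda'-\lambda)(t-s)>0$, so each integrand term is nonnegative. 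This avoids the moment issue entirely.
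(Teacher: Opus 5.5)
Your proposal is correct and is essentially the paper's argument: the paper writes the posterior at $\lambda+\Delta$ as the posterior at $\lambda$ reweighted by the non-increasing factor $e^{-\Delta u}$, and applies the covariance (Chebyshev) inequality to the two non-increasing functions $\chi_{(0,\epsilon)}(u)$ and $e^{-\Delta u}$, which is exactly your likelihood-ratio-order step. Your double-integral cross-multiplication is a self-contained proof of that same inequality, and, like the paper, you correctly note that nothing specific to the RSB density is used beyond properness.
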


The property that
Pr$({\eta}_{i}<\epsilon | y_{i}=0)$ is a non-decreasing function of $\lambda_{i}$ suggests that the multiplicative error effect captures the zero-valued outlier while the regression component moves away from zero. 
It should be noted that this property does not depend on the mixing distribution and is satisfied by any mixed Poisson, as we show in the Supplement, Section 1.2 \citep{Pia+Vanhatalo:2026}. 
Moreover, and more importantly, this definition is different from the robustness definitions provided in Section~\ref{sec:OP_definitions} and, as will be shown in the next section, does not imply them.

\subsection{Non robustness of mixed Poisson regression models against extreme outliers of type-$x$}\label{sec:non-robustness} 

We now provide the results of the asymptotic behavior of mixed Poisson regression models presented in Section~\ref{sec:count_models} in the presence of outliers of type-$x$.
For simplicity, hereafter we assume $x\in\mathbb{R}$ and a dataset with $n$ regular observation and one outlier of type-$x$. However results can be extended to $\mathbf{x}\in\mathbb{R}^p$, where outliers are such that one component $x_j\to\pm\infty$, and to multiple outliers.

When the covariate becomes infinitely large, independently of its direction, mixed Poisson models do not perform asymptotic outlier rejection:

\begin{theorem}\label{Th:nonrobustness_infcovar}
Let $y_{n+1}\in \mathbb{N}$ positive. Under a Bayesian mixed Poisson count regression model framework where the mixing distribution has LRV, or lighter, right tails and polynomial, or lighter, divergence at zero we have that 
$$\frac{p(\beta|y_1,\dots,y_n,y_{n+1})}{p(\beta|y_1,\dots, y_n)}\nrightarrow 1 \text{ as } x_{n+1} \to\pm\infty.$$ 
\end{theorem}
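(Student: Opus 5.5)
The plan is to write the ratio explicitly. By Bayes' rule,
\begin{equation*}
\frac{p(\beta\mid y_1,\dots,y_{n+1})}{p(\beta\mid y_1,\dots,y_n)}
=\frac{m(y_{n+1}\mid\beta)}{\int m(y_{n+1}\mid\beta')\,p(\beta'\mid y_1,\dots,y_n)\,d\beta'} .
\end{equation*}
Here $m(y_{n+1}\mid\beta)=\int\!\!\int \mathrm{Poisson}\big(y_{n+1};e^{x_{n+1}\beta+\zeta}\eta\big)\,\pi(\eta)\,\pi(\zeta)\,d\eta\,d\zeta$ is the marginal likelihood of the outlier. The $\zeta_{n+1}$ terms can be integrated out, since they enter only through this term. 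It then suffices to show that this ratio does not tend to the constant function $1$ as $x_{n+1}\to\pm\infty$. Concretely, I will show that, after normalization, $m(y_{n+1}\mid\beta)$ concentrates its relative weight near $\beta=0$. The ratio therefore tends to $0$ for every fixed $\beta\neq 0$, which already violates convergence to $1$.

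\textbf{Step 1: pointwise limit of the numerator.} Fix $\beta\neq0$ and $y_{n+1}\ge1$, and let $x_{n+1}\to\pm\infty$. Then $t=x_{n+1}\beta\to+\infty$ or $-\infty$ depending on the signs. Write $\mu=e^{t+\zeta}$ and $h(\mu)=\int \frac{(\mu\eta)^{y}}{y!}e^{-\mu\eta}\pi(\eta)\,d\eta$.
\begin{itemize}
\item If $\mu\to0$: since $y\ge1$, $h(\mu)\le \mu^{y}E_\pi[\eta^{y}e^{-\mu\eta}]$. Splitting at $\eta=1/\mu$ and using the LRV (or lighter) right tail, $h(\mu)=O\big((\log(1/\mu))^{-b}\big)\to0$. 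This step uses $y>0$ essentially; for $y=0$ the limit would be $1$.
\item If $\mu\to\infty$: substitute $u=\mu\eta$ to get $h(\mu)=\mu^{-1}\int \frac{u^{y}e^{-u}}{y!}\pi(u/\mu)\,du$. The polynomial divergence $\pi(\eta)\lesssim\eta^{a-1}$ gives $h(\mu)=O(\mu^{-a})\to0$.
\end{itemize}
Dominated convergence over the Gaussian $\zeta$ then gives $m(y_{n+1}\mid\beta)\to0$ for every $\beta\neq0$. By contrast, $m(y_{n+1}\mid 0)$ is a fixed positive constant $c_0$ that does not depend on $x_{n+1}$.

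\textbf{Step 2: the denominator, which is the main obstacle.} Convergence of the numerator to $0$ alone does not determine the ratio, because the denominator $D(x_{n+1})=\int m(y\mid\beta')\,p(\beta'\mid y_{1:n})\,d\beta'$ also tends to $0$ by dominated convergence. I need the rate of decay of $D$ to be strictly slower than that of $m(y\mid\beta)$ for fixed $\beta\neq0$. The idea is that the region $|\beta'|\lesssim 1/|x_{n+1}|$ keeps $m$ bounded below by $c_1>0$. This is because, for $|t|\le1$, $h$ stays bounded away from $0$ after integrating over $\zeta$. That region carries prior-posterior mass of order $p(0\mid y_{1:n})\cdot 2/|x_{n+1}|$, using a continuous, positive posterior density at $0$. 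Hence $D\gtrsim |x_{n+1}|^{-1}$.

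For fixed $\beta\neq0$ I then need $m(y\mid\beta)=o(|x_{n+1}|^{-1})$. In the branch $\mu\to\infty$ this holds provided $a>1$, or more generally provided the mixing density is bounded near $0$; for $a<1$ it is delicate. In the logarithmic branch, the bound $(\log|x\beta|)^{-b}$ is far too slow. So the naive comparison fails for heavy tails, and this step is where I expect the real work.

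\textbf{Fallback argument.} If the rate comparison fails, I would argue by contradiction instead, which suffices for the stated non-convergence. Suppose the ratio $R_x(\beta)\to1$ pointwise. Then $m(y\mid\beta)/m(y\mid\beta_2)\to1$ for any two points $\beta,\beta_2$. But for $\beta>0$ and $\beta_2<0$ with $x\to+\infty$, one of $m(y\mid\beta)$ and $m(y\mid\beta_2)$ sits in the $\mu\to0$ regime and the other in the $\mu\to\infty$ regime. Compare also with $\beta=0$, where $m=c_0>0$ is fixed while $m(y\mid\beta)\to0$. Hence $m(y\mid\beta)/m(y\mid0)\to0\neq1$, so $R_x(\beta)$ and $R_x(0)$ cannot both tend to $1$. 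This contradiction proves the theorem directly and avoids the rate comparison.

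I would present this fallback as the main proof. The sharper claim that the posterior collapses to a point mass at $0$ would be deferred to a separate result.
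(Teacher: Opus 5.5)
Your Step 1 is fine, and so is your shrinking-window lower bound $D(x_{n+1})\gtrsim|x_{n+1}|^{-1}$. The problem is the argument you promote to the main proof: it does not carry the content of the theorem. The contradiction only shows that $R_x(\beta)$ and $R_x(0)$ cannot both tend to $1$. It reduces to the observation that $R_x(0)=c_0/D(x)\to\infty$ because $D(x)\to0$.

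Read completely literally, as pointwise convergence at every $\beta$ including $\beta=0$, that is a non-convergence. But it is a failure at a single point, a Lebesgue-null set. It is compatible with $R_x(\beta)\to1$ for every $\beta\neq0$. In that case Scheff\'e's lemma would give total-variation convergence of $p(\beta\mid y_{1:n+1})$ to $p(\beta\mid y_{1:n})$, i.e.\ asymptotic outlier rejection in the sense of Definition~\ref{def:righlefttOP_typex_countproc}.

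The paper proves the substantive statement: $R_x(\beta)\to0$ for every fixed $\beta\neq0$. For that, the rate comparison of your Step 2 is unavoidable. Your aside comparing $\beta>0$ with $\beta_2<0$ could give failure on a set of positive measure, but only once you know the relative decay rates of the two regimes, which is the same computation.

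The step you abandoned does not actually fail; both of your rate estimates are too pessimistic.

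\emph{Branch $\mu\to\infty$.} Here $\mu=e^{x\beta+\zeta}$ is exponential in $x$, so $O(\mu^{-a})$ is $O(e^{-a|x\beta|})$. It stays exponentially small after averaging over the Gaussian $\zeta$, since $\E[e^{-a\zeta}]<\infty$. This is $o(|x|^{-1})$ for every $a>0$; no condition $a>1$ is needed.

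\emph{Branch $\mu\to0$.} Your bound $(\log(1/\mu))^{-b}$ is only the tail probability $\Pr(\eta>1/\mu)$. For $y\ge1$ the Poisson kernel localises at $\eta\asymp1/\mu$ and so picks up the density, not the tail. Write $h(\mu)=\int_0^\infty\frac{v^ye^{-v}}{y!}\,\mu^{-1}\pi(v/\mu)\,dv$.
\begin{itemize}
\item The part $v\le\sqrt{\mu}$ is at most $\mu^{y/2}/y!$.
\item For $v>\sqrt{\mu}$, use $\pi(\eta)\lesssim\eta^{-1}(\log\eta)^{-(1+b)}$ together with $\log(v/\mu)\ge\tfrac12\log(1/\mu)$. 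This gives $O\big((\log(1/\mu))^{-(1+b)}\big)$, with the extra $v^{-1}$ absorbed by $v^y$.
\end{itemize}
After a routine split of the Gaussian $\zeta$-integral, $m(y_{n+1}\mid\beta)=O(|x\beta|^{-(1+b)})$. Hence $|x|\,m(y_{n+1}\mid\beta)=O(|x|^{-b})\to0$, since $b>0$.

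Together with your lower bound on $D$, this gives $R_x(\beta)\to0$ for all $\beta\neq0$. That is exactly the paper's route. The paper shows $x\,p_x(y\mid\beta)\to0$ and that the outlier's marginal behaves like $\pi_\beta(0)|x|^{-1}\int p(y\mid z)\,dz$. It then deduces $p_x(\beta\mid y_{n+1})\to\delta_0$ and uses this to evaluate the denominator. Your window bound on $D$ is a slightly more elementary substitute for that $\delta_0$ step.
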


The asymptotic robustness of each mixed Poisson model considered in the presence of outliers of the five alternative cases is summarized in Table \ref{tab:countproc}. 

Furthermore, the asymptotic behavior of the posterior of the regression coefficients changes depending on the target value associated with the outlying covariate.
If the target is positive, the posterior converges in distribution to a point mass at zero:

\begin{theorem}\label{Th:pointmass_post}
Under a mixed Poisson count regression model framework, where the mixing distribution has LRV, or lighter, right tails and polynomial, or lighter, divergence at zero, we have that, for $y_{n+1}\in\mathbb{N}$ positive:
$$p(\beta|y_1,\dots,y_{n+1})\to\delta_0 \text{ as } x_{n+1}\to\pm\infty.$$ 
\end{theorem}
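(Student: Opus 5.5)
We will show that the posterior of $\beta$ concentrates at the origin by working with the marginal likelihood contribution of the outlier,
\[
L_{n+1}(\beta,\zeta_{n+1}) \;=\; \int_0^\infty \frac{(\lambda\eta)^{y}e^{-\lambda\eta}}{y!}\,\pi(\eta)\,d\eta,\qquad \lambda=\exp(x_{n+1}\beta+\zeta_{n+1}),\; y=y_{n+1}\ge 1,
\]
and writing
\[
p(\beta\mid y_1,\dots,y_{n+1}) \;\propto\; p(\beta\mid y_1,\dots,y_n)\,\E\!\left[L_{n+1}\mid\beta\right],
\]
where the expectation is over $\zeta_{n+1}$ and any other latent quantities. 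The argument rests on two asymptotic facts about $m(\lambda):=\int \mathrm{Pois}(y\mid\lambda\eta)\pi(\eta)\,d\eta$ for $y\ge1$.

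The first fact is the limit at the origin. As $\lambda\to0$ we have $m(\lambda)\to 0$, since the integrand is bounded by $\pi(\eta)$ and tends pointwise to zero because $y\ge1$; dominated convergence then applies.

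The second fact is the limit at infinity. As $\lambda\to\infty$ we again have $m(\lambda)\to0$. Substituting $u=\lambda\eta$ gives $m(\lambda)=\lambda^{-1}\int \mathrm{Pois}(y\mid u)\,\pi(u/\lambda)\,du$. The polynomial-divergence assumption gives $\pi(u/\lambda)\lesssim (u/\lambda)^{a-1}$ near zero, so this piece is $O(\lambda^{-a})$. The remaining region, where $u/\lambda$ is bounded away from zero, is controlled by boundedness of $\pi$ away from the origin together with its LRV-or-lighter tail. Both pieces therefore vanish.

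Hence $m$ is bounded, tends to zero at both ends, and is strictly positive at $\lambda=e^{\zeta_{n+1}}$, which corresponds to $\beta=0$, whatever the value of $x_{n+1}$.

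Fix $\epsilon>0$ and consider $|\beta|>\epsilon$. There $|x_{n+1}\beta|\ge \epsilon|x_{n+1}|\to\infty$. With $\zeta_{n+1}$ Gaussian, which is tight, $\lambda$ tends to $0$ or $\infty$ according to the sign of $x_{n+1}\beta$, uniformly in $\beta$ on this set except on a $\zeta$-set of vanishing probability. Since $m$ is bounded, bounded convergence gives
\[
\sup_{|\beta|>\epsilon}\E[m(\lambda)\mid\beta]\;\to\;0.
\]
For the normalizing constant we need a lower bound on the mass near the origin. On $|\beta|\le c/|x_{n+1}|$ we have $\lambda\in[e^{\zeta-c},e^{\zeta+c}]$, so $\E[m]\ge \kappa>0$ there. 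The prior-times-regular posterior $p(\beta\mid y_{1:n})$ has a continuous density, positive at $0$, so this neighborhood carries mass of order $1/|x_{n+1}|$. The normalizer is therefore at least of order $\kappa/|x_{n+1}|$.

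The main obstacle is that this $1/|x|$ lower bound must beat the decay of $\sup_{|\beta|>\epsilon}\E[m]$, so mere convergence of that supremum to zero is not enough; we need a rate faster than $1/|x|$.

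To get a rate, first handle the $\lambda\to0$ side: there $m(\lambda)\le C\lambda^{y}\le C e^{-\epsilon y|x|+y\zeta}$, which decays exponentially. Next handle the $\lambda\to\infty$ side: the scaling bound above gives $m(\lambda)\lesssim \lambda^{-a}+\lambda^{-1}$, so we obtain $e^{-a\epsilon|x|}$ decay after averaging over the Gaussian $\zeta$. Exponential decay dominates $1/|x|$, so
\[
\Pr(|\beta|>\epsilon\mid y_{1:n+1})\;\lesssim\; |x|\,e^{-c\epsilon|x|}\;\to\;0,
\]
which is convergence to $\delta_0$.

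For lighter-tailed mixings the bounds only improve: Gamma gives $\lambda^{-r}$, and the Poisson model with $\pi=\delta_1$ gives $e^{-\lambda}$. The statement therefore holds across the whole class considered.
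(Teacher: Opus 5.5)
You have the right architecture: a normalizer of order $1/|x_{n+1}|$ from the window $|\beta|\le c/|x_{n+1}|$, which must beat a bound on the outlier's likelihood over $\{|\beta|>\epsilon\}$. You also correctly identify the rate as the crux. The gap is in the rate you supply on the side $x_{n+1}\beta\to-\infty$ (i.e.\ $\lambda\to 0$). The bound $m(\lambda)\le C\lambda^{y}$ holds only with $C=\E_\pi[\eta^{y}]/y!$. Every LRV right tail $\pi(\eta)\approx\eta^{-1}(\log\eta)^{-(1+b)}$, such as RSB or log-$t$, has $\E_\pi[\eta^{y}]=\infty$ for all $y\ge 1$, so the step fails for exactly the mixing distributions the theorem targets.

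Your own substitution $u=\lambda\eta$ shows what actually happens. We have $m(\lambda)=\lambda^{-1}\int \mathrm{Pois}(y\mid u)\,\pi(u/\lambda)\,du$ with $\lambda^{-1}\pi(u/\lambda)\approx u^{-1}(\log(u/\lambda))^{-(1+b)}$. Hence $m(\lambda)\asymp y^{-1}(\log(1/\lambda))^{-(1+b)}$ as $\lambda\to0$: large $\eta$ compensates a vanishing $\lambda$, which is the same mechanism behind robustness to type-$y$ outliers. On $\{|\beta|>\epsilon,\ x_{n+1}\beta<0\}$ the decay is therefore only $(\epsilon|x_{n+1}|)^{-(1+b)}$, polynomial rather than exponential, and your final bound $|x|e^{-c\epsilon|x|}$ is false.

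The conclusion survives, but only just. Combining the correct rate with your normalizer gives $\Pr(|\beta|>\epsilon\mid y_{1:n+1})\lesssim |x_{n+1}|^{-b}\to 0$, which uses $b>0$ (properness of the LRV density) essentially. To close the gap along your route, you must prove this logarithmic bound uniformly on $|\beta|>\epsilon$ and after averaging over $\zeta_{n+1}$. One way is to bound $m$ by its monotone envelope and split on $\zeta_{n+1}\le \epsilon|x_{n+1}|/2$.

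The paper avoids pointwise rates altogether by comparing integrals instead of a supremum with a normalizer. After $z=x\beta$, the posterior mass of $\{|\beta|\ge a\}$ is $\int_{|z|\ge a|x|}p(y\mid z)\pi_\beta(z/x)\,dz\big/\int_{\mathbb{R}} p(y\mid z)\pi_\beta(z/x)\,dz$. The map $z\mapsto p(y\mid z)$ is integrable on $\mathbb{R}$; by Tonelli and $w=e^{z}u$, $\int_{\mathbb{R}} p(y\mid z)\,dz=\int\pi(u)\,du\int_0^\infty w^{y-1}e^{-w}/y!\,dw=1/y$ for any proper $\pi$. So with a bounded prior density that is continuous and positive at $0$, the numerator is a vanishing tail integral while the denominator tends to $\pi_\beta(0)/y$. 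Replacing your supremum over $\{|\beta|>\epsilon\}$ by this integrated comparison repairs the argument, and needs no tail assumption on $\pi$ for this step.
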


Thus, in contrast to large outliers of type-$y$, in which case the Poisson-RSB and Poisson-log-t provide outlier rejection, the models are not only non-robust to type-$x$ outliers but express very radical ill-behavior through posterior of $\beta$ collapsing to a point mass at zero.
On the contrary, when the target value of the type-$x$ outlier is zero, the posterior conditional to the outlier approaches a truncated posterior given data excluding the outlier:
\begin{theorem}
\label{Th:posterior_zerocounts}
Under a mixed Poisson count regression model framework, we have that, for $y_{n+1}=0$:
\begin{align*}
    p(\beta|y_1,\dots,y_{n}, y_{n+1}=0) &\to  \frac{p(\beta|y_1,\dots,y_n)\mathbb{I}(\beta<0)}{\int_{-\infty}^0 p(\beta|y_1,\dots,y_n)d\beta} \text{ as } x_{n+1}\to\infty,\\
    p(\beta|y_1,\dots,y_{n}, y_{n+1}=0) &\to \frac{p(\beta|y_1,\dots,y_n)\mathbb{I}(\beta>0)}{\int_0^\infty p(\beta|y_1,\dots,y_n)d\beta}\text{ as } x_{n+1}\to-\infty.
\end{align*}

\end{theorem}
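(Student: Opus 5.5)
The plan is to write the posterior with the outlier via Bayes' rule as
$$p(\beta|y_1,\dots,y_n,y_{n+1}=0)=\frac{p(\beta|y_1,\dots,y_n)\,L(\beta;x_{n+1})}{\int p(\beta'|y_1,\dots,y_n)\,L(\beta';x_{n+1})\,d\beta'},$$
where $L(\beta;x_{n+1})=p(y_{n+1}=0|\beta,x_{n+1})$ is the marginal likelihood of the zero count. We integrate over $\eta_{n+1}\sim\pi$ and over the Gaussian effect $\zeta_{n+1}$, and keep the notation with $\zeta$ suppressed as in the statement. Since $\Pr(y=0|\theta)=e^{-\theta}$, we get explicitly
$$L(\beta;x)=\int\!\!\int \exp\{-e^{x\beta+\zeta}\eta\}\,\pi(\eta)\,d\eta\,\phi(\zeta)\,d\zeta,$$
a bounded function with $0<L\le 1$, monotone in $x\beta$. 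The key observation is that $L$ converges pointwise to an indicator. As $x\to+\infty$ and $\beta<0$, we have $e^{x\beta+\zeta}\to0$ for every $\zeta$, so the integrand tends to $1$ and, by dominated convergence (dominated by $\pi(\eta)\phi(\zeta)$), $L\to1$. For $\beta>0$, $e^{x\beta+\zeta}\to\infty$, so the integrand tends to $0$ for every $\eta>0$, provided $\pi$ has no atom at $0$; this holds for all the mixing laws considered, including $\delta_1$. Hence $L\to0$. The boundary $\beta=0$ is a single point with prior and posterior measure zero, so it can be ignored. The case $x\to-\infty$ is symmetric, with the roles of $\beta<0$ and $\beta>0$ swapped.

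Next, apply dominated convergence to the numerator and the normalizing constant. Because $0\le L\le1$ and $p(\beta|y_1,\dots,y_n)$ is a proper density, the normalizer converges to $\int_{-\infty}^0 p(\beta|y_1,\dots,y_n)\,d\beta$. This limit is strictly positive, since the posterior given the regular data has a Gaussian prior factor and positive likelihood everywhere, hence full support. The ratio therefore converges pointwise in $\beta$ to the truncated posterior stated in the theorem. Finally, Scheffé's lemma upgrades pointwise convergence of densities to convergence in total variation, and hence in distribution. If the multivariate or random-effect versions need the joint $(\beta,\zeta)$ posterior, the same argument applies with the indicator replaced by $\mathbb{I}(\beta_j<0)$.

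The main obstacle is the exchange of limit and integral inside $L$. In particular one must make sure that no mass of $\pi$ sits at $\eta=0$, since that would prevent $L\to0$ for $\beta>0$, and one must check that the positivity of the limiting normalizer holds. Both follow from the tail and origin assumptions used in Section~\ref{sec:count_models}: every mixing density there is absolutely continuous on $(0,\infty)$ or is $\delta_1$. Neither polynomial divergence nor heavy right tails affect the argument, because the integrand is bounded by $1$. This explains why the theorem is stated without any condition on the mixing distribution.
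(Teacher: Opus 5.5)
Your proposal is correct and follows the same route as the paper's proof. In both, dominated convergence shows that the zero-count likelihood $\int_0^\infty \exp(-e^{x\beta}u)\pi(u)\,du$ tends to $1$ or $0$ according to whether $x\beta\to-\infty$ or $+\infty$, and Bayes' rule with the regular-data posterior as the prior then gives the truncated limit. Your explicit handling of the normalizing constant, the strict positivity of its limit, and the Scheff\'e upgrade to total-variation convergence make rigorous steps the paper only asserts. The independent-$\zeta_{n+1}$ factorization needs a priori independence of the random effects; otherwise your joint-$(\beta,\zeta)$ fallback is the right fix.
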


The full proofs of the above theorems are available in the Supplement, Section 2 \citep{Pia+Vanhatalo:2026}. 

However, to provide intuition on why the posterior of $\beta$ collapses to a point mass given $y_{n+1}>0$, we provide here a sketch of the proof.
We start by showing that when the target value associated with type-$x$ outlier is positive, the corresponding mixed Poisson likelihood becomes zero everywhere except at zero, indicating a very poor performance of the model. 
We state this formally with the following Lemma:
\begin{lemma}\label{Lemma:likelihood_0}
    Under a mixed Poisson count regression model framework, given $y\in\mathbb{N}$ positive, and $\beta\neq0$, we have that
    $$p_x(y|\beta)=\int_0^\infty \text{Pois}(y|\exp(x\beta)\eta)\pi(\eta)d\eta \to 0 \text{ as }x\to\pm\infty.$$
\end{lemma}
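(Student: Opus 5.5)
Set $\lambda=\exp(x\beta)$. For fixed $\beta\neq0$, $\lambda\to\infty$ or $\lambda\to0$ as $x\to\pm\infty$, depending on the sign of $x\beta$. So it suffices to show that the mixed Poisson probability of a fixed positive count,
$$g(\lambda)=\int_0^\infty \frac{(\lambda\eta)^y e^{-\lambda\eta}}{y!}\pi(\eta)\,d\eta,$$
tends to $0$ both as $\lambda\to0$ and as $\lambda\to\infty$. We only use that $\pi$ is a proper density, with the tail assumptions available if needed.

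\textbf{Case $\lambda\to0$.} For each $\eta$, the integrand $\text{Pois}(y|\lambda\eta)\pi(\eta)$ tends to $0$ because $y\ge1$. It is dominated by $\pi(\eta)\sup_{\mu>0}\mu^y e^{-\mu}/y!=\pi(\eta)\,y^y e^{-y}/y!$, which is integrable. Dominated convergence then gives $g(\lambda)\to0$.

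\textbf{Case $\lambda\to\infty$.} The same domination applies. For each fixed $\eta>0$ we have $(\lambda\eta)^y e^{-\lambda\eta}\to0$, and $\{\eta=0\}$ has $\pi$-measure zero since $\pi$ is a density. Dominated convergence again gives $g(\lambda)\to0$.

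It is still worth noting the rate, which matters later for the point-mass result. Substituting $u=\lambda\eta$ gives
$$g(\lambda)=\lambda^{-1}\int_0^\infty \text{Pois}(y|u)\,\pi(u/\lambda)\,du.$$
Near the origin $\pi(u/\lambda)\lesssim (u/\lambda)^{a-1}$ by the polynomial divergence assumption, so $g(\lambda)=O(\lambda^{-a})$, which still tends to $0$ since $a>0$.

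\textbf{Main obstacle.} The step needing care is the $\lambda\to\infty$ case. A mixing distribution can put large mass near $\eta\approx0$ and so compensate a huge $\lambda$. Dominated convergence resolves this as long as $\pi$ has no atom at $0$, and this is where the condition ``polynomial, or lighter, divergence at zero'' enters: it ensures $\pi$ is an integrable density, not a point mass at the origin. Combining the two cases, $p_x(y|\beta)\to0$ as $x\to\pm\infty$ for every $\beta\neq0$. At $\beta=0$ the likelihood stays fixed at $g(1)>0$.
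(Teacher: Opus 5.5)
Your argument is correct and is essentially the paper's proof: substitute $w=e^{x\beta}$, note that $\text{Pois}(y|w\eta)\to 0$ pointwise both as $w\to 0$ and as $w\to\infty$ when $y\ge 1$, and apply dominated convergence with the bound $\text{Pois}(y|y)\,\pi(\eta)=y^y e^{-y}\pi(\eta)/y!$. One correction to your ``main obstacle'' paragraph: the lemma needs no tail or divergence hypothesis at all, since even an atom at $0$ would contribute $\text{Pois}(y|0)=0$ for $y\ge 1$; those assumptions only matter for rate statements such as your $O(\lambda^{-a})$ bound (which holds only as $\lambda\to\infty$), and that rate is what the paper's next lemma, $x\,p_x(y|\beta)\to 0$, relies on.
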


Hence, with extreme outlier of type-$x$, and an associated $y>0$ the origin (in the regression parameter space) becomes the best value of $\beta$ under the mixed Poisson likelihood. 
While Lemma~\ref{Lemma:likelihood_0} holds for any mixed Poisson model with mixing distribution with at most polynomial divergence at zero and at most LRV right tail, for clarity, we show it here for the simple Poisson model: $p_x(y|\beta)= \text{Pois}(y|\exp(x\beta))$.
By applying a change of variable, $\lambda=\exp(x\beta)$, it is easy to see that the likelihood, $\text{Pois}(y|\lambda) = \lambda^ye^{-\lambda}/y!$ goes to zero as $\lambda\to0$ and as $\lambda\to\infty$, given $\beta\neq0$. On the contrary, when $y=0$ the likelihood is non zero only for values of $\beta$ with opposite sign w.r.t. $x$, such that the regression model $\lambda=\exp(x\beta)\to0$. The proof for the mixed Poisson models is given in the Supplement, Section 2 \citep{Pia+Vanhatalo:2026}. 

In addition to Lemma~\ref{Lemma:likelihood_0}, the key fact in the proof of Theorem~\ref{Th:pointmass_post} is to observe that while the marginal likelihood $m_x(y)$ goes to 0 as well, it does that slower than the likelihood $p_x(y|\beta)$ (with $\beta\neq 0$).
By applying the change of variable $z=x\beta$ to the mixed Poisson models we have that
\begin{align*}
    m_x(y) &= \int_{-\infty}^\infty p_x(y|\beta)\pi_\beta(\beta)d\beta \\
    &= \frac{1}{x}\int_{-\infty}^\infty p(y|z)\pi_\beta\Big(\frac{z}{x}\Big)dz\\
    &\approx    \big(\pi_\beta(0)/x\big) \int_{-\infty}^\infty p(y|z)dz \text{ as } x\to\pm\infty
    \end{align*}
where, in the last step, $p(y|z) = \text{Pois}(y|e^z)$ is independent of $x$ and $\pi_\beta$ is bounded by assumption.
This indicates that the marginal decays to 0 at a rate $\frac{1}{x}$, slower than the likelihood.
If we then consider the posterior, by replacing $m(y)$ with its asymptotic equivalent we can show that, when $|\beta|>0$, \citep[see Supplement, Section 2,][]{Pia+Vanhatalo:2026}
$$\lim_{x\to\pm\infty} p_x(\beta|y) = \lim_{x\to\pm\infty} \frac{p_x(y|\beta)\pi_\beta(\beta)}{\pi_\beta(0)/x \int_{-\infty}^\infty p(y|z)dz }= 0.$$
Then, given $a>0$, applying DCT and the property above, we get that:
\begin{align*}
\text{Pr}(|\beta|\ge a|y) &= \int_{\{|\beta|\ge a\}} p_x(\beta|y)d\beta \\
    &= \frac{ \int_{\{|\beta|\ge a\}} p_x(y|\beta)\pi_\beta(\beta)d\beta}{ \int_\mathbb{R} p_x(y|\beta)\pi_\beta(\beta)d\beta } \\
    &\overset{z=x\beta}{=}  \frac{\frac{1}{x} \int_{\{|z|\ge ax\}} p(y|z)\pi_\beta(z/x)dz}{ \frac{1}{x}\int_\mathbb{R} p(y|z)\pi_\beta(z/x)dz }\\
    &\overset{DCT}{\to} \frac{\pi_\beta(0)}{\pi_\beta(0)} \frac{\int_{|z|>\infty} p(y|z)dz}{\int_\mathbb{R} p(y|z)dz}=0 \text{ as } x\to\pm\infty
\end{align*}
that is, the posterior density collapses to the Dirac delta function.

 \begin{table}[t]
 \centering
 \caption{Summary of the mixed Poisson models, and their data generating processes (DGP), considered in this study and whether they perform (Yes) or not (No) asymptotic rejection of outliers of different types.}

    \begin{tabular}{l l| c|cccc}
    \toprule
    Distribution & DGP \ \ \  & \multicolumn{5}{c}{Outlier type}  \\
    & & L-$y$ & \multicolumn{2}{c}{L-$x$} & \multicolumn{2}{c}{S-$x$} \\
    & &         & $y>0$ & $y=0$ & $y>0$ & $y=0$ \\
         
    \midrule
    Poisson & $y_i\sim \text{Pois}(\lambda_i)$ & No & No  & No & No & No  \\
    {Negative Binomial} & $y_i\sim \text{Pois}(\lambda_i\eta_i)$ & No & No  & No & No & No \\
    & $\eta_i\sim \text{Gamma}(r,r)$  & & &&& \\
    {Poisson log Student-$t$} &$y_i\sim \text{Pois}(\lambda_i\eta_i)$ & Yes & No  & No & No & No \\
    & $\log\eta_i\sim \text{St}(k)$ & & &&& \\
    {Poisson RSB} &$y_i\sim \text{Pois}(\lambda_i\eta_i)$ & Yes & No  & No & No & No \\
    & $\eta_i\sim \text{RSB}(a,b)$  & & &&&\\    
    \bottomrule
    
    \end{tabular} 
     \label{tab:countproc}
 \end{table}

\section{Experiments}\label{sec:experiments}

\subsection{Numerical visualization of mixed Poisson models behavior under extreme outliers}

To visualize the results of sections~\ref{sec:OP_definitions}--\ref{sec:non-robustness} we implemented each of the four models in Stan \citep{Stan} and fit them to simulated datasets including 5 regular observations, sampled from a Poisson distribution $\text{Pois}(\exp(a\true+xb\true))$, and one increasingly large outlier of type-$y$ (corresponding to the asymptotic case $y\to\infty$) and decreasingly small outlier of type-$x$ (corresponding to the asymptotic case $x\to-\infty$). 
We first fitted the models to a dataset without an outlier and used the resulting posterior as a baseline.
We then added to the data one outlier at a time, fitted each model to the new data, and compared the resulting posterior with the baseline. 
We repeated this for ten outliers of each type as visualized in figures~\ref{fig:vioppost_yout_extrm}.B and~\ref{fig:vioppost_xout_extrm}.B.

As expected, under the Poisson-RSB and Poisson-log-$t$ models the posterior of the regression coefficient captured the true parameter value without being affected by the outlier of type-$y$ (Figure \ref{fig:vioppost_yout_extrm}.A). In contrast, in the presence of one outlier of type-$x$, the posterior of the regression coefficient gradually concentrated around 0, as the outlier became more extreme, for all models considered (\ref{fig:vioppost_xout_extrm}.A). 
We also visualized the posterior of the mixing component of the outlier observation $\eta\out$ and compared it to $y\out/e^{a\true+b\true x\out}$, in log scale. While for type-$y$ outliers, $\eta\out$ properly captured the outlying observation when assigned a RSB or log Student's-$t$ prior (Figure \ref{fig:vioppost_yout_extrm}.C), it was unable to capture outliers of type-$x$ under any of the models considered (Figure \ref{fig:vioppost_xout_extrm}.C).
Moreover, the posterior of $\eta_{\text{out}}$ did not concentrate at zero as could be assumed by Theorem~\ref{theom:Hamura_eta}.
However, while Theorem~\ref{theom:Hamura_eta} considers the case where non-random $\lambda$ increases, Figure \ref{fig:vioppost_xout_extrm}.C shows the marginal posterior for $\eta_{\text{out}}$ when $\lambda$ is a function of random $\beta$.

\begin{figure}[t]
    \centering
    \includegraphics[width=1\linewidth]{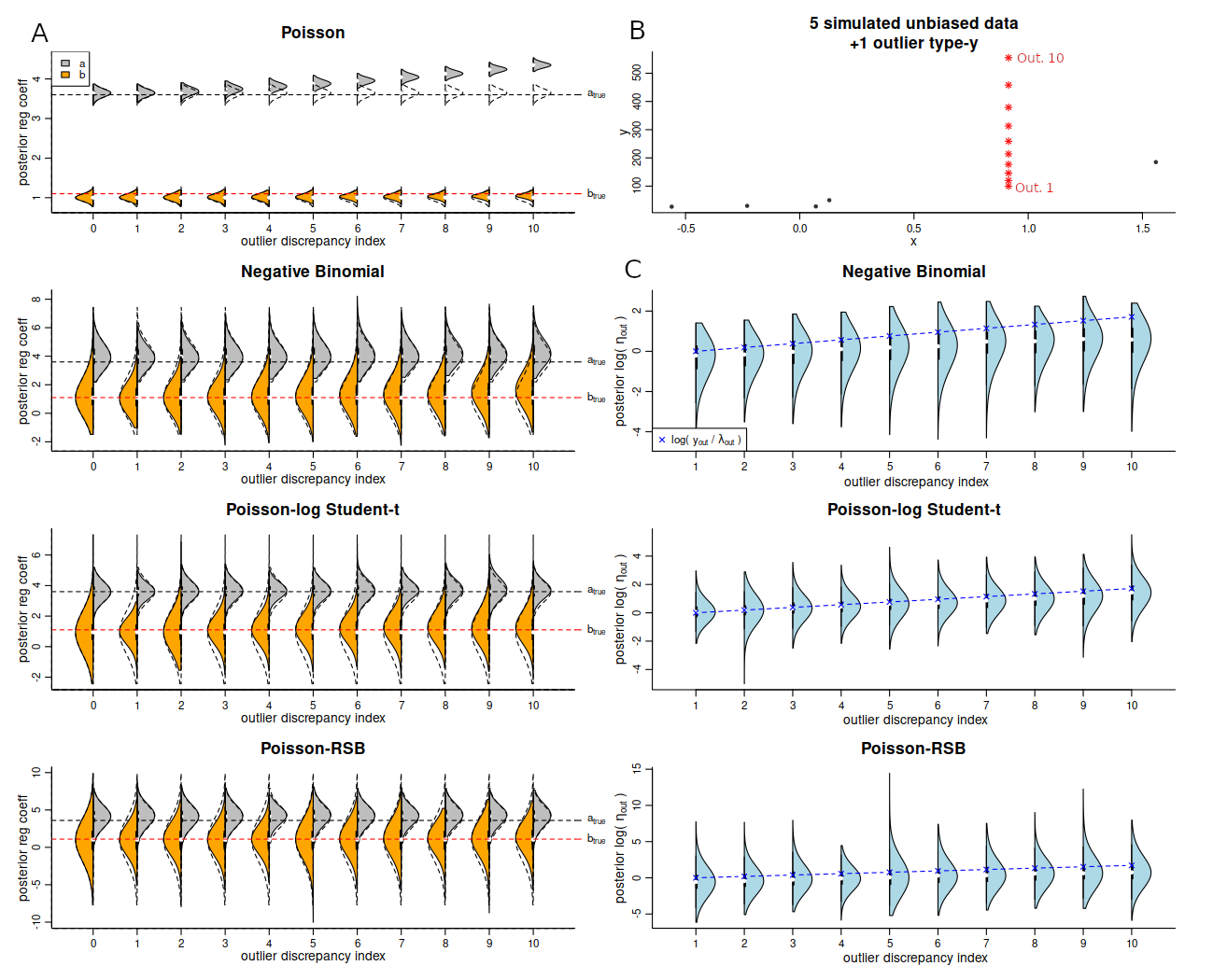}
    \caption{Posterior distributions for intercept ($a$) and regression coefficient ($b$) (A), and multiplicative random effect $\eta\out$ (C), obtained by fitting Poisson and three mixed Poisson models to simulated datasets including five regular observations and one increasingly more extreme outlier of type L-$y$ (B).  Note the differences in y-axis ranges.} 
    \label{fig:vioppost_yout_extrm}
\end{figure}

\begin{figure}[t]
    \centering
    \includegraphics[width=1\linewidth]{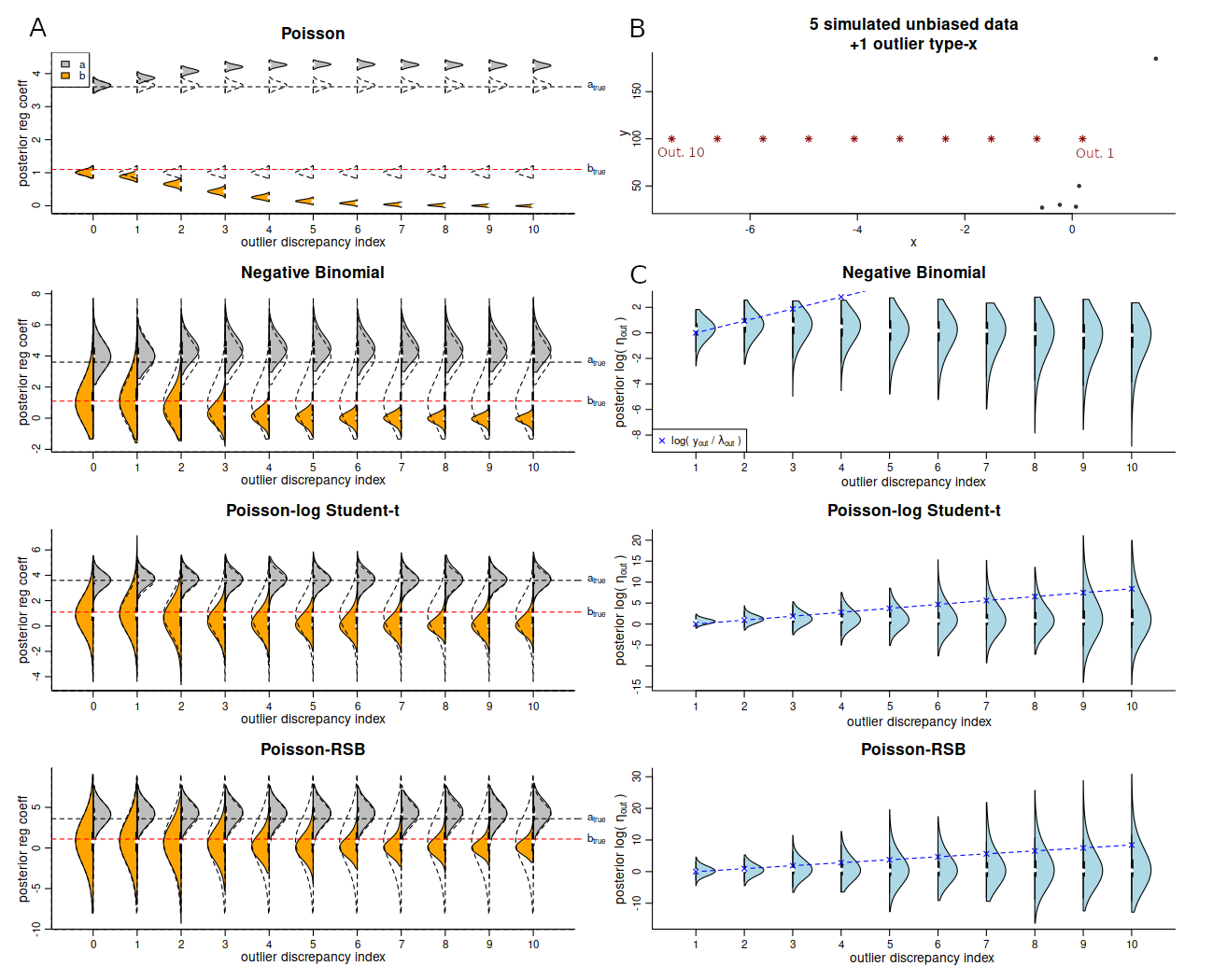}
    \caption{Posterior distributions for the intercept, $a$, and regression coefficient, $b$, (A), and multiplicative random effect $\eta\out$ (C), obtained by fitting Poisson and three mixed Poisson models to simulated datasets including five regular observations and one increasingly more extreme outlier of type S-$x$ (B). Note the differences in y-axis ranges.} 
    \label{fig:vioppost_xout_extrm}
\end{figure}

\subsection{Behavior of mixed Poisson models under moderate outliers}

While the above theoretical results provide insight on the behavior of the mixed Poisson models under extreme outliers, their properties under moderate outliers can differ.
With moderate (or non-extreme) outliers we mean data points whose component (the target and covariate) values are within the range of the other data, but under some anomalous combinations with respect to the trends in the remaining observations. 

To empirically study how the posterior distribution for the regression coefficients changes in the presence of moderate outlier observations, we set up two simulation experiments.
Independently of the source of error, outliers can also be classified based on the target value and its expectation conditional on the rest of the data: a data point can be said to be a left hand-side (LHS) outlier if its target value is much smaller than expected, and it is a right hand-side (RHS) outlier if the target value is much higher than expected \citep[following ][]{OHagan:1979}.
We thus considered four different designs for outliers, we distinguish between LHS and RHS outliers and between outliers originated from corrupted target (type-$y$) or covariate (type-$x$). 
We first generated a non-outlier dataset $\mathcal{D}=\{y_i,x_i\}_{i=1}^n$ for two alternative data sizes $n=\{10,100\}$, by generating the covariates $x_i$ from a Gaussian distribution with mean 0 and variance 1 and the counts from a Poisson distribution with log-rate $\log \lambda = a+ bx$.
The parameters of the log rate were sampled from two uniform distributions $b\sim \text{U}(0.001,2.433),\ a\sim \text{U}(4.868-2b, 5.136-2b)$ resulting in data with an expected count ranging from 130 to 170 when $x=2$, and from 0.005 to 150 when $x=-2$. 
Given the non-outlying data, we then generated series of outliers, $\mathcal{D}^{\out}=\{y^{\text{out}}, x^{\text{out}}\}$, with the observed outcome increasingly different from the true Poisson rate, so that the discrepancy $\Delta \log y^{\out} =\log( y^{\out})-\log(\lambda(x^{\out}))$ increases linearly. We generated 10 outliers of each type as follows:
\begin{itemize}
    \item LHS-$y$: $x^{\out} = \max (x)$ and $y^{\text{out}} \in\{ \lambda(\max (x)), \dots, 0  \}$ decreases to 0
    \item LHS-$x$: $y^{\text{out}} = \lfloor \lambda(\bar{x}) \rfloor$ and $x^{\out}\in\{ \bar x, \dots, 3\max(x) \}$ increases 
    \item RHS-$y$: $x^{\text{out}} = \bar{x}$ and $y^{\out} \in\{  \lfloor \lambda(\bar{x}) \rfloor,\dots, 3\max (y)\}$ increases 
    \item RHS-$x$: $y^{\text{out}} = \lfloor \lambda(\max{x}) \rfloor$ and $x^{\out} \in\{ \bar x, \dots, \min( x^{\out}) \}$ decreases so that $\lambda(\min(x^{\out}))  = \min(y)+10^{-3}$.
\end{itemize}

In the first experiment, we added to the regular dataset one progressively more extreme outlier, leading to 11 different simulated datasets for each outlier type and each data size (see Figure~\ref{fig:data_exp2} for an example).
In the second experiment, to test the impact of the number of outliers, we constructed outlier data sets $\mathcal{D}^{\out}_j$ with $j=1,\dots,10$ replicates of the same outlier. We selected five alternative outliers increasingly far from the regular data (n=100), leading to 51 datasets for each outlier type.
We then fitted the four different mixed Poisson models presented in Section~\ref{sec:count_models}, with uninformative Gaussian priors to the linear coefficients $a, b$, to each simulated data (with and without outlier(s)).
We measured each model's robustness in terms of KL divergence (see the next subsection) and repeated the full experimental procedure 20 times to average over the randomness in the data-generating process. 
Note that, we additionally tested the Poisson log-t model with different degrees of freedom: $k=1,2,4,6$. Since $k=1$ and $k=2$, returned similar KL estimates (Figure \ref{fig:KL_poislogt_k1246}), we set $k=1$ in our experiments.

The code necessary to reproduce this simulation study, and the other experiments in the later sections, is available on GitHub\footnote{\texttt{https://github.com/EnvStat/MixedPoissonOutlierRejection}}.

\begin{figure}[!t]
    \centering
    \includegraphics[width=0.99\linewidth]{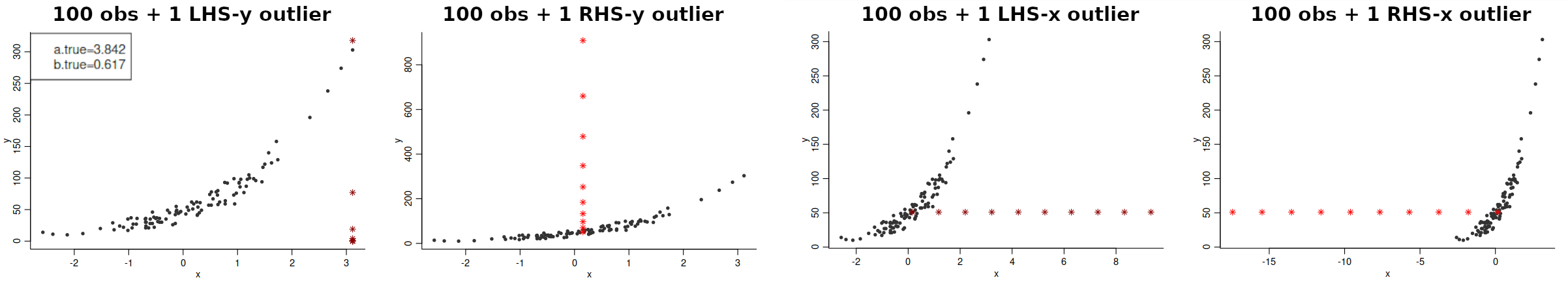}
    \caption{Scatter plots of one simulated dataset: in gray the regular data points, in red the outlier which is gradually getting further from the bulk of the other data in different directions, because of a progressively more corrupted target or covariate value. }
    \label{fig:data_exp2}
\end{figure}

\begin{figure}[!ht]
    \centering
    \includegraphics[width=0.95\linewidth]{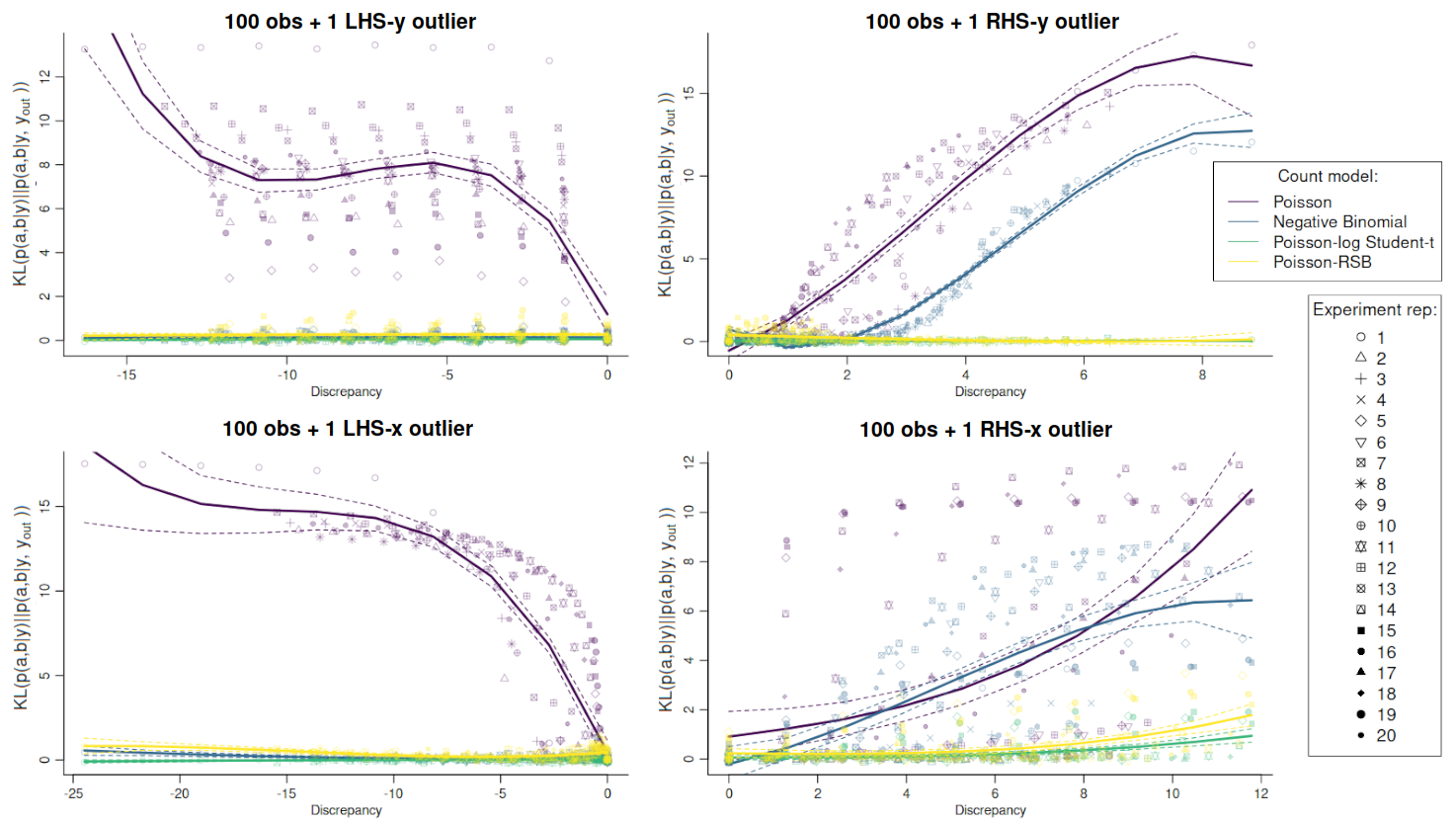} 
    \caption{Plots of true discrepancy vs KL divergence between posteriors of the regression coefficients $a,b$ with and without outlier. Results are shown in different plots for the four different types of outlier (Figure~\ref{fig:data_exp2}), added to 100 regular observations. Different colors correspond to different count models, different point shapes to different simulation repetitions. We fitted a Bayesian polynomial regression of third order to highlight the trend in the KL divergence over different experiment repetitions. The lines represent the expected posterior KL and 95\% CI (dashed lines).}    
    \label{fig:result_exp1}
\end{figure}

\begin{figure}[!ht]
    \centering
    \includegraphics[width=.95\linewidth]{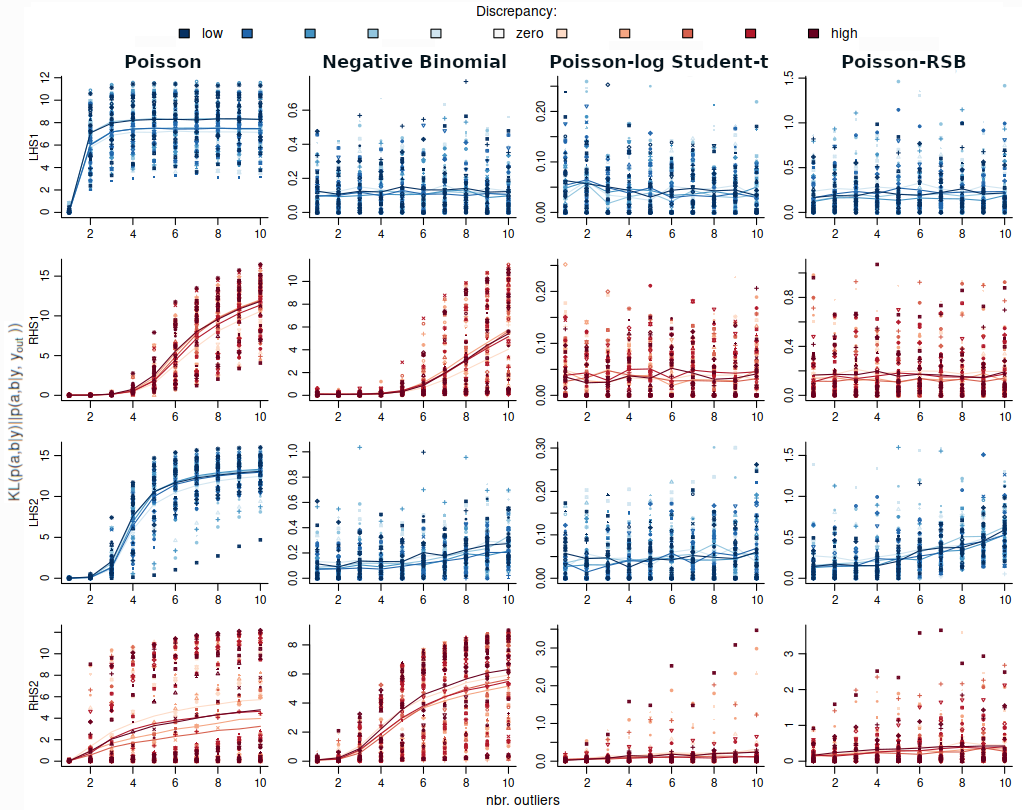}
    \caption{Plots of number of outliers vs KL divergence between posteriors of the regression coefficients $a,b$ with and without outliers (note the different ranges in y-axis). The results are shown in separate plots for the four different types of outlier (rows; see Figure~\ref{fig:data_exp2}), and for each count model (columns). We added to 100 regular data points 1 to 10 replicates of the same outlier, and repeated the experiments with outliers increasingly further from the bulk of the data. Different color intensities correspond to different magnitude of the outlier: we used light to dark blue for decreasing discrepancy of LHS outliers, and light to dark red for increasing discrepancy for RHS outliers. Different shapes indicate different simulation repetitions. Trends in the KL divergences highlighted with interpolating lines.
     }
    \label{fig:result_exp_multiouts}
\end{figure}

\begin{figure}[ht]
    \centering
    \includegraphics[width=0.99\linewidth]{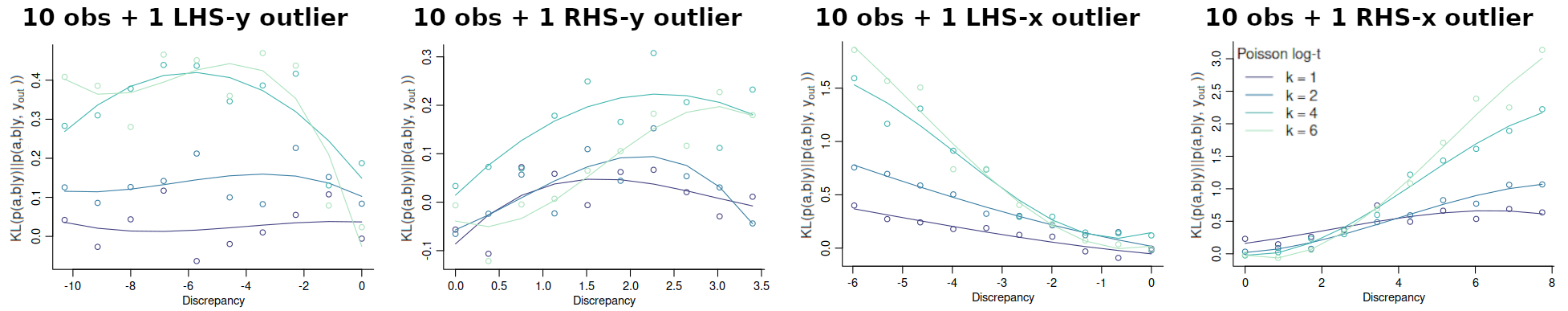}
    \caption{Plots of true discrepancy vs KL divergence of $a,b$, for the four different types of outlier, added to 10 regular observations. Different colors correspond to Poisson log-t mixtures with different degrees of freedom. A polynomial regression of third order highlights the trend in the KL divergence.}
    \label{fig:KL_poislogt_k1246}
\end{figure}

\subsubsection{Measuring robustness to moderate outliers}\label{sec:robustness_measures}

Assuming the presence of one outlier we measured and compared alternative models' robustness to it via the KL divergence of the model's regression parameters' posterior distribution attained when the outlier was included in the model, with respect to the one obtained after excluding the outlier from the model: 
\begin{equation}\label{eq:kldiv_n+1}
\KL(p(\beta|y_1,\dots,y_n)|| p(\beta|y_1,\dots,y_n, y_{n+1})).    
\end{equation}
The lower the divergence, the more robust the model. The KL-divergence was computed similarly in the presence of multiple outliers.

For all count models considered, we generated Markov chain Monte Carlo (MCMC) samples from the posterior distribution using Stan \citep{Stan}.
%
We then estimated the KL divergence between each compared pair of posterior distributions using the MCMC samples and bias-reduced generalized-NN Divergence Estimator as proposed by \citet{Wang:2009} and available in R in the \emph{kldest} package \citep{kldest24}. 
It should be noted that, when using k-NN density estimates, it is possible to get some negative $\hat{\KL}$ values, indicating that the two posterior distributions are very similar, basically exchangeable.  
Thus, we interpreted negative k-NN estimates as zeros.

\subsubsection{Simulations results}

The estimated trends of KL divergence between posterior with and without an outlier along increasing discrepancy $\Delta y^{\out}$, given 100 regular data points, are shown in Figure~\ref{fig:result_exp1}.
The Poisson model was strongly affected by all types of moderate outliers. 
The Negative Binomial model was able to handle moderate LHS outliers reasonably well but was sensitive to RHS outliers, originating from either corrupted target or covariate. 
In line with their asymptotic properties, Poisson-RSB and Poisson-log-t were robust against moderate outliers of type $y$ (corrupted target) but Poisson-log-t tended to more completely reject outliers as indicated by lower KL-divergence values. 
Moreover, despite not being asymptotically robust to outliers of type-$x$ (corrupted covariate), these two mixed models were significantly less sensitive to moderate outliers of type-$x$ than Poisson and Negative Binomial.
Again Poisson-log-t was least sensitive to moderate outliers of type-$x$
The experiments with 10 non-outlying observations and one increasingly more extreme outlier returned similar results.

The differences between the models were similar when the number of outliers was increased (Figure \ref{fig:result_exp_multiouts}). The Poisson model was very sensitive to multiple outliers while the Negative Binomial model was mostly sensitive to the increase in number of RHS outliers, and mildly sensitive to LHS outliers of type $y$. The Poisson-log-t seemed to be the least sensitive to all types of outliers while the Poisson-RSB was to some extent sensitive to the increase in the number of outliers of type-$x$.

We further investigated the accuracy of each model's posterior distribution  by measuring the proportion of simulations where posterior samples of $a,b$ included the true parameter value within the 80\% credible interval, under each experimental design. 
In the experiments run with 10 regular observations and one outlier, Poisson-RSB was the best model, achieving almost full coverage, closely followed by Poisson-log-t. 
On the other hand, when experiments were run with 100 regular observations and one outlier, Poisson-log-t performed best, followed by Negative Binomial, while the coverage of Poisson-RSB dropped drastically (tables~\ref{tab:exp_coverage_ab}).

\begin{table}[!t]

\caption{Percentage of experiments run with 10 and  with 100 regular observations, s.t. the posterior samples of the regression coefficients include their true value in the 80\% CI.}
\centering
\begin{tabular}[]{llrrrrrrrr}
\toprule
regular& & \multicolumn{2}{c}{LHS-$y$} & \multicolumn{2}{c}{RHS-$y$}& \multicolumn{2}{c}{LHS-$x$}& \multicolumn{2}{c}{RHS-$x$}\\
 obs.&& a & b & a & b & a & b & a & b\\
\midrule
10&Pois & 56.36 & 18.18 & 29.09 & 41.82 & 45.45 & 36.36 & 32.73 & 21.82\\
&NegBin & 100.00 & 100.00 & 83.64 & 100.00 & 100.00 & 100.00 & 85.45 & 34.55\\
&Poislogt & 100.00 & 100.00 & 100.00 & 100.00 & 98.18 & 98.18 & 100.00 & 74.55\\
&PoisRSB & 100.00 & 100.00 & 100.00 & 100.00 & 100.00 & 92.73 & 100.00 & 100.00\\
\addlinespace
100&Pois & 34.55 & 18.18 & 41.82 & 50.91 & 29.09 & 23.64 & 45.45 & 40.00\\
&NegBin & 100.00 & 100.00 & 80.00 & 100.00 & 100.00 & 98.18 & 58.18 & 49.09\\
&Poislogt & 80.00 & 96.36 & 94.55 & 100.00 & 80.00 & 100.00 & 80.00 & 87.27\\
&PoisRSB & 20.00 & 80.00 & 20.00 & 80.00 & 20.00 & 87.27 & 20.00 & 80.00\\
\bottomrule
\end{tabular}
\label{tab:exp_coverage_ab}
\end{table}

\section{Kara Sea polar bears case study}\label{sec:kara_sea}

As a motivating real world case study we tested the alternative count regression models in species distribution modeling of polar bears (\emph{Ursus maritimus}) in the Kara Sea, located north of Siberia.
The data were previously analyzed by \citet{Makinen2018} with the aim of evaluating how the polar bear population is distributed in the area and how its distribution pattern is affected by environmental conditions over the years.
Our target data consist of 9009 polar bear count observations, including one moderate outlier (see below), collected in three different monitoring campaigns in years 1996-2005, 2005--2013, and 2013, respectively. 
The second study campaign consists of presence-only data. 
Following the earlier study, we modeled the observed polar bear counts with respect to distance to the coast, ice concentration, and sea surface salinity, as these covariates were demonstrated to be relevant in explaining polar bears' distribution in the Kara Sea.  
The data were available at a 5 km spatial resolution and at a temporal resolution of one month \citep[for a more details see][]{Makinen2018}.

The dataset contained one clear outlier observation: at one sampling location with zero ice three polar bears were counted (the maximum number of observed polar bears in the whole data was four), while no polar bears were detected in any other location with ice contraction less than 28\% (see Figure~\ref{fig:respc_karasea}.A).
Also ecologically, the ice concentration is atypical for such an high number of polar bears.
The reason for this outlier was a mistake in the satellite based sea ice concentration measurement (Mäkinen, personal communication).
However, both target and covariate values of this data point are within the range of the remaining data so, according to our previous outlier classifications, this anomalous observation is a moderate outlier of type-$x$.
To assess how much this outlier affects the inference results of the four different count models (Section~\ref{sec:count_models}), we fitted all of them to these data both in the presence and in the absence of the outlier. We then compared how the results differed between these two datasets. 
Moreover, as \cite{Hamura:2025} also introduced a mixing distribution for $\eta$ which allows to activate the heavy-tailed RSB mixing distribution only in the presence of extreme observations, we included this version of the Poisson-RSB model into comparison as well
(see Table \ref{tab:prior_pi_kara_sea} for the prior specification for different choices of the mixing distribution $\pi$).

\begin{table}[t]
\centering
    \caption{Prior specification for the different Poisson-mixing component $\eta_i\sim\pi$, in the Kara Sea case-study.}\label{tabl:kara_sea_models}
    \begin{tabular}{l|l}
    \toprule
    mixed Poisson & Mixing distribution $\pi$ \\
    \midrule
         Poisson& $\eta_i\sim\delta_1$  \\
         Negative Binomial& $\eta_i\sim \text{Gamma}(r,r)$ \\
         & $r\sim\text{Gamma}(1, 0.1)$\\
         Poisson log-St & $\log(\eta_i)\sim \text{St}_2(0,\sigma_\text{st}^2)$\\
         & $\log(\sigma_\text{st})\sim  \text{N}(0,1)$\\
         Poisson RSB & $\eta_i\sim \text{RSB}(0.5,0.5)$\\
         Poisson RSB s-mix &  $\eta_i\sim (1-s)\delta_1 + s\text{RSB}(0.5,0.5)$ \\
         & $s\sim \text{Beta}(1,1)$ \\
         
         \bottomrule
    \end{tabular}
    
    \label{tab:prior_pi_kara_sea}
\end{table}
We considered the following models
\begin{equation}\label{eq:sdm_karasea}
        y_i|\lambda_i,\eta_i \sim \text{Poisson}(\lambda_i\eta_i), \
        \log(\lambda_i) = x_i' \beta + \alpha_{j(i)}\\
\end{equation}
where we implemented a GLM with the three environmental covariates and an additional fixed effect, $a_j \ j=1,2,3$, accounting for the different sampling methods used in the three monitoring campaigns, and assigned Gaussian priors to all fixed effects.
To account for the fact that polar bear counts collected in the campaign 2005--2013 are presence-only data, we used a Poisson distribution truncated at zero for the data points belonging to this sampling campaign.

\subsection{Kara Sea study results}

For each of the models considered, we generated MCMC samples from the posterior distributions of the model parameters using Stan \citep{Stan}, and checked Rhat and effective sample size for convergence.
To assess whether the outlier presence influenced the inference on responses of polar bear abundance to environmental factors, we evaluated the relative change in expected polar bear abundance over the range of each covariate, estimated including and excluding the outlier (Figure \ref{fig:respc_karasea}.A). 
Overall, all count models, with the exception of the Poisson-RSB model, returned similar results: ice concentration had a positive impact on polar bear abundance, while it decreased along increasing salinity and distance from the coast.
The presence of the outlier affected the inferred ice response for all models considered so that polar bear abundance was predicted to increase more steeply along increasing ice coverage when the outlier was excluded.
This difference was clearly visible also when comparing the posteriors of the regression coefficients (Figure~\ref{fig:respc_karasea}.B). 
Surprisingly, the Poisson-RSB model, when implemented without the mixture (s-mix) formulation, returned posterior estimates with very high variation and failed to properly capture the trends in the data.

We also used the widely available information criterion \citep[WAIC;][]{Gelman+etal:2014} to compare the five models in terms of their predictive performance. The Negative Binomial was the best for both datasets. Excluding the outlier observation increased only marginally each model's relative performance, and did not alter their ranks. In terms of WAIC, the Poisson-RSB s-mix had better predictive power in the presence of an outlier than in the absence of it (see Table \ref{tab:waic_karasea}).

\begin{table}[t]
\label{tab:waic_karasea}

\caption{WAIC (SE) estimates for Kara Sea case-study, including and excluding the outlier, and WAIC relative change.}
\centering
\begin{tabular}[t]{llll}
\toprule
  & WAIC$|y,y\out$ & WAIC$|y$ & RelCh(WAIC)\\
\midrule
Poisson & 2897 (181) & 2889 (181) & 0.0028\\ 
Negative Binomial & 1597 (80) & 1589 (79) & 0.0050\\ 
Poisson log Student-t & 2278 (128) & 2270 (128) & 0.0035\\ 
Poisson RSB & 1604 (110) & 1597 (110) & 0.0044\\ 
Poisson RSB-s mix & 2454 (167) & 2489 (171) & -0.0141\\ 
\bottomrule
\end{tabular}
\end{table}

\begin{figure} 
    \centering
    \includegraphics[width=.95\linewidth]{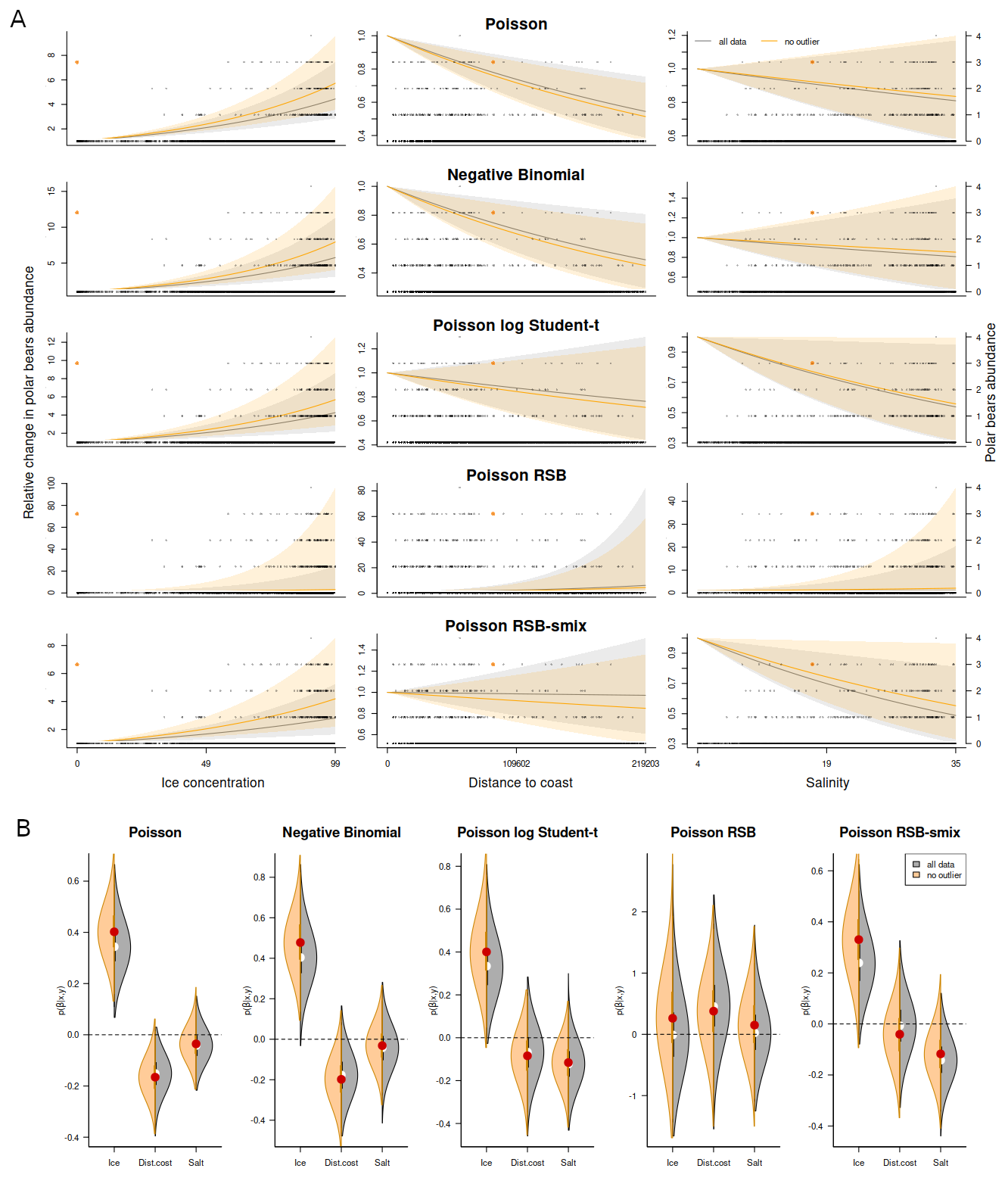}  
    \caption{A) Median and 80\% CI of relative change in polar bear abundance along the range of each covariate, for each count model, both in presence of the outlier observation (gray line) and without it (orange line). The black dots represent the polar bears abundances observed at the corresponding covariate values. The orange dot corresponds to the outlier data point. B) Posterior samples of the regression coefficients associated with the three covariates for each model, both in presence of the outlier observation (gray) and without it (orange).}
    \label{fig:respc_karasea}
\end{figure}

\section{Discussion and conclusion}\label{sec:discussion}

In this work, we extended the definition of robustness of Bayesian models against outliers in terms of asymptotic outlier rejection (following \cites{OHagan:1979} outlier-proneness definition for i.i.d. continuous random variables) to count regression models, and defined asymptotic robustness against outliers originating from both infinite target (type-$y$) and infinite covariate (type-$x$).
We showed how, in mixed Poisson regression, corrupted covariates affect inferential results differently from outliers presenting anomalously large counts. 
Indeed, this contrasts the symmetry between these two types of outliers in linear regression \citep{Gagnon:2020}.
For mixed Poisson models, the mixing component is only able to handle large outlying counts when its tails are sufficiently heavy: a LRV right tail makes the model robust to an infinitely large target, and a left tail with polynomial diverging behavior allows the model to capture zero inflated data \citep{Hamura:2025}.
However, as shown here (Theorem~\ref{Th:nonrobustness_infcovar}), independently of the mixing component tails, mixed Poisson models are affected by outliers originating from corrupted covariates. 
Asymptotically, as the outlying covariate becomes infinitely large (or small), the likelihood shrinks to zero for any value of $\beta$ in the regression parameter space, with the exception of the case $\beta=0$. As a result, the posterior collapses to a point mass distribution, concentrating at zero. 
This phenomenon shares similarity with the Lindley's paradox \citep{Lindley:1957} in the sense that a very poor model fit indicated by a zero likelihood (almost everywhere) leads to an extremely precise regression coefficient posterior.
At a close look, this degenerate behavior makes perfect sense though. In the presence of an infinite covariate, the best posterior estimate concentrates around zero, as this is the only way to cancel the effect of such extreme outlier.

However, as demonstrated by our simulation studies and real world case study, this lack of asymptotic robustness against outliers of type-$x$, did not immediately translate to poor model performance in the presence of moderate, finite, outliers. Often, when the outlier was still within the range of the rest of the data, already a Negative Binomial (Poisson-Gamma) model was sufficient to properly capture the structure of the studied data even though parameter estimates were biased. 
Interestingly, the Negative Binomial model also had a better fit to the data in many of our moderate outlier of type-$x$ experiments than the Poisson-RSB model. 
Specifically, the extreme tail of the RSB distribution let the variance in the posterior estimates explode if applied to all data points, indicating that this model might not be appropriate to fit data that include only moderate outliers.
This behavior was fixed when defining the mixing prior as a mixture of RSB and point mass at one (the Poisson-RSB s-mix model), as shown in the Kara Sea case study.

However, our case study on polar bear distribution in the Kara Sea demonstrates that already a very small proportion of moderate outliers (here $1/9009 < 0.01\%$) can have a significant effect on the results. 
While the changes in the inferred coefficient values for ice coverage with and without the outlier seem marginal (Figure~\ref{fig:respc_karasea}.B), they are large from a predictive point of view. 
Without the outlier, polar bear density was predicted to increase approximately six fold from the lowest ice concentration to the highest ice concentration, whereas when the outlier was included into the analysis, the increase was only four fold (Figure~\ref{fig:respc_karasea}.A). 
A one third decrease in the predicted relative change can have substantial implications, for example, in ecological risk assessment \citep{Helle+etal:2020}.
The result thus highlights the importance of future work towards count models that are robust for outliers in covariates -- as well as methods to detect them \citep{Plummer:2025}.

The theory and methods for robust Bayesian modeling have taken significant steps during the recent years \citep{OHagan+Pericchi:2012,Desgagne:2015,Gagnon:2020, Gagnon:2023, Hamura:2025}.
However, we still lack a comprehensive understanding of the robustness properties of Bayesian regression models. 
Specifically, even though the duality of a data-point in regression setting has long been acknowledged in robust regression, its impact  on outlier rejection in generalized linear regression has been left for lesser attention. 
To our knowledge, this is the first work that formally studied infinite covariate outliers in generalized linear models. 
Since other generalized linear models share similarity to count regression models in that their likelihoods are not symmetric to extreme target and covariate values, our work also has important implications beyond count regression.
For example, \cite{Gagnon:2023} discussed extreme outliers in covariates in the context of Gamma regression but provided formal robustness results only for extreme target values. 

Specifically, our work provides several important insights on Bayesian count regression modeling.
First, our formal proof that mixed Poisson models do not behave symmetrically with respect to outliers of type-$y$ and type-$x$ has important practical implication. 
While practitioners can use LRV heavy tailed models (e.g., Poisson log-t or Poisson-RSB s-mix) to secure against corrupted target values, they should still check their model results for high leverage and influence points to identify potential covariate outliers \citep{Chatterjee:1986,Plummer:2025}.
Second, the results presented here also provide insight into future methods development by indicating that mixing component might not be enough to make count regression models robust against outliers of type-$x$. 
However, our Theorem~\ref{Th:nonrobustness_infcovar} considers only mixing distributions with at most polynomial divergence at zero, so one future research direction should be whether faster divergence could change the behavior of the models -- or should we follow completely different path.
After all, the mixing component $\eta$ is by construction part of the data generating process while in generalized linear models covariates are considered fixed by design. Hence, robustness to type-$x$ outliers could require outlier generation description for covariates. Alternatively, a spike-and-slab type of mixture formulation for data point-wise regression coefficients could be a solution.
Third, our Corollary~\ref{Th:Pois-logt_rightOP} shows that the log Student's-$t$ mixed Poisson model is robust to large outliers of type-$y$. 
While this result was a straightforward application of the theoretical work of \citet{Hamura:2025}, together with our empirical results it provides useful insight to practitioners. Our simulation and case study results indicate that log Student's-$t$ mixing distribution might behave more nicely in the presence of moderate outliers than the heavier tail rescaled Beta distribution.
Fourth, our results also motivate further studies on how different types of regression models, such as the logistic, behave in the presence of corrupted covariates. 
Additionally, it could be worth studying what happens in the presence of outliers presenting corrupted values in multiple directions.

We want to also note that a rigorous and theoretical definition of moderate outliers and of the corresponding model robustness against them is still lacking in Bayesian statistics. 
Developing a cohesive, non-asymptotic theory of robustness would be a valuable contribution to the field, especially useful to applied statisticians.
These questions were out of the scope of this work though.

\section*{Acknowledgement}
We thank Shu Zhen Tan for her help in initial tests and experiments at the beginning of this study and express our gratitude to Petteri Piiroinen and Eetu Halme for their help in proofreading the proofs of the theorems presented in this article and for discussions that helped improve them.
We thank Dario Gasbarra for helpful discussions related to the theoretical part of the work.
The authors would also like to thank the anonymous referees, an Associate Editor and the Editor for their constructive comments that improved the quality of this paper.
This work has received funding from the Academy of Finland (grants 317255 and 1368100) and the European Union (ERC Consolidator Grant BEFPREDICT, 101087409). In addition, IP acknowledges funding from the Doctoral Programme in Mathematics and Statistics of the University of Helsinki.

\newpage

\appendix
\section{ }\label{appendix}
\setcounter{equation}{0}
\setcounter{figure}{0}
\setcounter{table}{0}
 \renewcommand{\theequation}{\thesection.\arabic{equation}}
 \renewcommand{\thefigure}{\thesection.\arabic{figure}}
 \renewcommand{\thetable}{\thesection.\arabic{table}}

\subsection{Mixed Poisson models' robustness to outliers of type-$y$: theoretical results}


\subsubsection{Proof of Corollary 3.5: Poisson-log-t robustness to infinite counts}\label{appendix:robustness_Poislogt}

Corollary 3.5 ensures theoretical robustness to large outliers of type-$y$ for the mixed Poisson-log-t model, and is a direct consequence of Theorem 3 of \citet{Hamura:2025}.

\emph{proof:}

Let us denote by $f_\pi(y;z) = \int_0^\infty \text{Pois}(y|e^z u)\pi(u)du$.

We have assumed that $\pi(u)$ follows a log-t distribution, with $k>0$ degrees of freedom.
For $f_{\log-t}(y;z)$ we hence have that $\pi(u)\sim \frac{1}{u}\frac{1}{(\log(u))^{1+k}}$ as $u\to\infty$, meaning the log-t distribution is right LRV, with index $k+1$.

Following Lemma S1 from \citet{Hamura:2025}, if the mixing distribution $\pi$ is right LRV, then, $p(y|\beta,\zeta)/ f_\pi(y;0)\to 1$ as $y \to+\infty$.

We denoted with $y_K$ the set of non-outlying observations and with $y_L$ the set of infinitely large outliers of type-$y$, such that for $i\in L$, $y_i = y_i(\omega)\to\infty$ as $\omega\to\infty$.
Applying Bayes Theorem, conditional independence of observations, and writing marginals as integrals w.r.t. $\pi(\beta,\zeta)$ we get:
\begin{equation*}
    \begin{split}
        \frac{p(\beta,\zeta|y_K, y_L)}{p(\beta,\zeta|y_K)} &= \frac{p(y_K|\beta,\zeta)p(y_L|\beta,\zeta)\pi(\beta,\zeta)}{p(y_K|\beta,\zeta)\pi(\beta,\zeta)} \frac{p(y_K)}{p(y_K,y_L)}\\
        &= \frac{p(y_L|\beta,\zeta)\int_{\mathrm{R}^p\times\mathrm{R}^q} p(y_K|\beta,\zeta)\pi(\beta,\zeta)d\beta d\zeta}{\int_{\mathrm{R}^p\times\mathrm{R}^q} p(y_L|\beta,\zeta)p(y_K|\beta,\zeta)\pi(\beta,\zeta)d\beta d\zeta} \\
        &= \frac{p(y_L|\beta,\zeta)\int_{\mathrm{R}^p\times\mathrm{R}^q} p(y_k|\beta,\zeta)\pi(\beta,\zeta)d\beta d\zeta}{\int_{\mathrm{R}^p\times\mathrm{R}^q} p(y_L|\beta,\zeta)p(y_K|\beta,\zeta)\pi(\beta,\zeta)d\beta d\zeta} \frac{f_\pi(y_L;0)}{f_\pi(y_L;0)}\\
        &\overset{\text{DCT+L}}{\longrightarrow} \frac{\int_{\mathrm{R}^p\times\mathrm{R}^q} p(y_K|\beta,\zeta)\pi(\beta,\zeta)d\beta d\zeta}{\int_{\mathrm{R}^p\times\mathrm{R}^q} p(y_K|\beta,\zeta)\pi(\beta,\zeta)d\beta d\zeta} = 1 \text{ as } \omega\to +\infty
    \end{split}
\end{equation*}
where in the last step, we used the Dominated convergence Theorem (DCT) and Lemma S1 from \citet{Hamura:2025}.
The hypotheses of normality and boundedness on $\pi(\zeta)$ and $\pi(\beta)$ respectively, together with the assumption that $|K^+| + 1 \ge |L| + p$ are sufficient to satisfy DCT conditions, as proved in Lemma S4 from \citet{Hamura:2025}.
$\square$

\subsubsection{Concentration of the probability mass of the mixing component at zero with zero counts and increasing intensity}\label{appendix:robustness_zeros}

\begin{theorem}
Given a mixed Poisson model: $y_i\sim \text{Poisson}(\lambda_i\eta_i)$, $\eta_i\sim \pi$ noise effect, and $\lambda_i$ regression model, for any $\epsilon>0$, the posterior probability $P(\eta_i<\epsilon|y_i=0)$ is a non decreasing function of $\lambda_i$.

\begin{proof}
    For any $\lambda,\Delta>0$, we have that:

    \begin{equation*}
        \begin{split}
            \left\{ \Pr(\eta_i<\epsilon|y_i=0) \right\}\big|_{\lambda_i= \lambda} &= \frac{\int_0^\infty\chi_{0,\epsilon)}(u) e^{-\lambda u}\pi(u)du}{\int_0^\infty e^{-\lambda u}\pi(u)du}\\
            &\overset{*}{\le} \frac{\int_0^\infty\chi_{0,\epsilon)}(u) e^{-\lambda u}e^{-\Delta u}\pi(u)du / \int_0^\infty e^{-\lambda u}\pi(u)du }{ \int_0^\infty e^{-\lambda u}e^{-\Delta u}\pi(u)du / \int_0^\infty e^{-\lambda u}\pi(u)du} \\
            &= \left\{ \Pr(\eta_i<\epsilon|y_i=0) \right\}\big|_{\lambda_i= \lambda+\Delta} 
        \end{split}
    \end{equation*}
    
    In $*$ we used the covariance inequality for:
    $$\E_\rho[\chi_{(0,\epsilon)}(u)]\le \frac{\E_\rho[e^{-\Delta u}\chi_{(0,\epsilon)}(u)]}{\E_\rho[e^{-\Delta u}]}$$
    where $\chi_{(0,\epsilon)}(u)$ and $e^{-\Delta u}$ are non increasing function in $u\in(0,\infty)$ and $\rho(u) = \frac{e^{-\lambda u}\pi(u)}{\int_0^\infty e^{-\lambda u}\pi(u) du}$ is the conditional distribution for $p(u|y=0)$.

    Therefore, $\Pr(\eta_i<\epsilon|y_i=0)$ is a non decreasing function of $\lambda_i$. 
\end{proof}
\end{theorem}

\subsection{Mixed Poisson models non-robustness to outliers of type-$x$: theoretical results}\label{appendix:nonrob_infcov}

Let us consider a set of $n$ observations, $\mathcal{D}_n = \{ d_i : d_i = (y_i,\mathbf{x}_i) \in\mathbb{N}\times\mathbb{R}^p,\  i=,\dots,n\}$, modeled with a mixed Poisson count regression with Gaussian latent variables, under a Bayesian framework:
\begin{align*}
    y_i|u_i,\beta&\sim \text{Pois}(\exp(\mathbf{x}_i'\beta)u_i)\\
    \beta&\sim \text{N}(\cdot)\\
    u_i&\sim\pi : \ u_i>0
\end{align*}
where $\beta$ is a $p$-dimensional vector of Gaussian regression coefficients 
and $u_i$ are the multiplicative error components capturing outlier observations following a continuous mixing distribution $\pi$. Specifically, we are interested in mixing distributions divergent at 0, and with a very heavy log-regularly varying (LRV) right tail, as these distributions have been proven to be robust to infinitely large counts and able to handle zero-inflation \citep{Hamura:2025}.
That is, we assume that the mixing distribution tails are such that 
\begin{align}
    \pi(u)&\approx u^{a-1} \text{ as } u\to0, \ 0<a<1, \label{eq:right_tail_ap}\\
    \pi(u)&\approx \frac{1}{u\log(u)^{b+1}} \text{ as } u\to\infty, \ b>0. \label{eq:left_tail_ap}
\end{align}
We further assume the $n$ observations to be conditionally independent given $\beta$. 
We define an observation $i$ as outlier if there is an index $j\in 1,\dots,p$ such that $|x_{i,j}|\to\infty$.

For simplicity, hereafter we assume $x\in\mathbb{R}$. 
We also consider a dataset with two observations only, $(y,x), (y\out,x\out) \in\mathbb{N}\times\mathbb{R}$ (one outlier and one not). The result can then be extended to multiple observations.
Note also that the following results can be obtained for mixed Poisson regression models whose regression coefficients $\beta$ follow any continuous bounded prior $\beta\sim\pi_\beta$.

To prove that Bayesian mixed Poisson count regression models are not robust to outlying covariates, in a Bayesian sense, we want to show that as the outlier covariate gets infinitely large, the model does not perform asymptotic outlier rejection, that is:

\begin{equation}\label{eq:outlier_nonrob}
\frac{p(\beta|y, y\out)}{p(\beta|y)} \nrightarrow 1 \text{ as } x\out\to\pm\infty.    
\end{equation}

We will first show that, given $y_{\text{out}}>0$, the posterior distribution of the regression coefficient $\beta$ given an outlier with infinitely large covariate degenerates to the point mass distribution at 0, as $x\out\to\pm\infty$, that is $p(\beta|y\out)\to \delta_0 \text{ as } x\to\pm\infty.$
We will hence derive the result in Equation \ref{eq:outlier_nonrob}.

\vspace{1cm}

Let us denote by $p_x(\beta|y)=p_x(y|\beta)\pi_\beta(\beta)/m(y)$ the posterior, where $m(y) = \int_\mathbb{R}p_x(y|\beta)\pi_\beta(\beta)d\beta$ is the marginal, $p_x(y|\beta) = \int_0^\infty\text{Pois}(y|e^{x\beta}u)\pi(u)du$ is the mixed Poisson likelihood, and $\pi_\beta(\beta)$ is a continuous bounded prior. 

To prove that $p_x(\beta|y)$ converges in the sense of distribution to $\delta_0$ we will first check that it satisfies the following properties and then show that satisfying such property is a sufficient condition for $p_x(\beta|y)$ to converge to $\delta_0$ \citep[][ properties 5.5.1]{choksi2022}.

\begin{lemma}\label{Lemma:likelihood_0_ap}
    Under a mixed Poisson count regression model framework, given $y\in\mathbb{N}$ positive, and $\beta\neq0$, we have that
    $$p_x(y|\beta) \to 0 \text{ as }x\to\pm\infty.$$
\end{lemma}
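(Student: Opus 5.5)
Set $\lambda=e^{x\beta}$. Since $\beta\neq0$, $\lambda\to\infty$ when $x\beta\to+\infty$ and $\lambda\to0$ when $x\beta\to-\infty$, so it suffices to show that $g(\lambda)=\int_0^\infty \text{Pois}(y|\lambda u)\pi(u)\,du$ tends to $0$ both as $\lambda\to0$ and as $\lambda\to\infty$, for fixed $y\ge1$. In both regimes I will use dominated convergence: the integrand satisfies $\text{Pois}(y|\lambda u)=(\lambda u)^y e^{-\lambda u}/y!\le \sup_{t>0}t^ye^{-t}/y! = y^y e^{-y}/y!\le 1$. This bound is uniform in $\lambda$ and $u$, and $\pi$ is a probability density, so the constant $1$ is an integrable dominating function.

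\emph{Case $\lambda\to0$.} For each fixed $u>0$, $(\lambda u)^ye^{-\lambda u}\to0$ because $y\ge1$. Dominated convergence then gives $g(\lambda)\to0$. No tail assumption on $\pi$ is needed here, so the condition of polynomial divergence at zero is not used in this case.

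\emph{Case $\lambda\to\infty$.} For each fixed $u>0$, $(\lambda u)^ye^{-\lambda u}\to0$ since the exponential dominates. The point $u=0$ has $\pi$-measure zero because $\pi$ is a continuous density, so the pointwise limit is $0$ almost everywhere, and dominated convergence again gives $g(\lambda)\to0$. One might worry that mass of $\pi$ near $u=0$ could keep $\lambda u$ of order one; the pointwise argument handles this, since for each fixed $u$ the product $\lambda u$ still diverges. If a rate is wanted, split the integral at $u=\epsilon$: the part over $u\ge\epsilon$ is at most $(\lambda\epsilon)^y e^{-\lambda\epsilon}\to0$, and the part over $u<\epsilon$ is at most $\Pr_\pi(u<\epsilon)$, which is of order $\epsilon^a$ under the polynomial-divergence assumption. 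Taking $\epsilon=\lambda^{-1/2}$ makes both parts vanish.

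I do not expect a substantive obstacle. The one step needing care is the uniform bound $\sup_t t^ye^{-t}<\infty$, since it justifies the interchange of limit and integral without invoking the LRV or divergence conditions. Consequently the lemma holds for any proper continuous mixing distribution, and in particular for all the models in Table~\ref{tab:countproc}. For the plain Poisson model, $\pi=\delta_1$, the statement follows directly from $g(\lambda)=\lambda^ye^{-\lambda}/y!$.
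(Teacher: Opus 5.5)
Your proof is correct and takes essentially the same route as the paper. You substitute $w=e^{x\beta}$, observe that the Poisson pmf vanishes pointwise in $u$ as $w\to0$ and as $w\to\infty$, and apply dominated convergence with the bound $\text{Pois}(y|y)\pi(u)$, which is integrable because $\pi$ is a probability density; as in the paper, no tail assumptions on $\pi$ are needed, and your optional rate argument is a harmless extra (its bound on $u\ge\epsilon$ only needs $\lambda\epsilon\ge y$, which holds eventually for $\epsilon=\lambda^{-1/2}$).
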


\emph{proof:}

Under mixed Poisson model assumptions, the likelihood is such that:
\begin{align*}
    p_x(y|\beta) &= \int_0^\infty \text{Pois}(y|e^{x\beta}u)\pi(u)du\\
    &= \int_0^\infty u^{y}\pi(u)/y! \exp(y x\beta - u e^{x\beta}) du
\end{align*}

We are interested in studying the limit for $x\to \pm \infty$, given $|\beta|\ge 0$. 
As $x$ in the likelihood component always appears multiplied by $\beta$, the limit cases we are interested in are $x\beta \to \pm \infty$, given $\beta\neq 0$.
Let us denote $w = e^{x\beta}$, then
$$\lim_{w\to\infty}  \text{Pois}(y|wu) = \lim_{w\to\infty} (wu)^ye^{-wu}/y! = 0$$
$$\lim_{w\to0}  \text{Pois}(y|wu) = \lim_{w\to0} (wu)^ye^{-wu}/y! = 0.$$

Hence, applying the Dominated convergence Theorem (DCT), we get that
$$\lim_{x\beta\to\pm \infty} p_x(y|\beta) \overset{DCT}{=}  \int_0^\infty u^{y}\pi(u)/y!\Big\{ \lim_{x\beta\to\pm \infty} \Big(\exp(y x\beta - u e^{x\beta})\Big) \Big\} du = 0.$$

Note that in order to apply the DCT, and swap limit (w.r.t. $x$) and integral (w.r.t. $u$), we need to check that the DCT hypotheses are satisfied.
That is, we need to check that $\lim_{x\to\pm\infty} \text{Pois}(y|e^{x\beta}u)\pi(u) =f(u)$ exists for all $u>0$ (which we just checked), and that for each $x\in\mathbb{R}$ there exists an integrable function independent of $x$, such that $\text{Pois}(y|e^{x\beta}u)\pi(u)\le g(u)\ \forall u>0$. 

Given $y,w = e^{x\beta}>0$, then: 
\begin{align*}
     \text{Pois}(y|wu)\pi(u) &= w^y u^y e^{-wu}/y!\pi(u) \\
     &\le  y^y e^{-y} /y! \pi(u)
\end{align*}
since the Poisson pmf $\text{Pois}(y|wu)$ reaches its maximum for $u=y/w$, hence $g(u) = \text{Pois}(y|y)\pi(u)\ge \text{Pois}(y|wu)\pi(u)$ for all $u,w>0$. Further, $g(u)$ is integrable since $\int_0^\infty \text{Pois}(y|y)\pi(u)du = \text{Pois}(y|y)<\infty$.
$\square$

\begin{lemma}\label{Lemma:xlikelihood_0}
    Under a mixed Poisson count regression model framework,  where the mixing distribution has LRV, or lighter, right tails and polynomial, or lighter, divergence at zero, given $y\in\mathbb{N}$ positive, and $\beta\neq0$, we have that
    $$xp_x(y|\beta) \to 0 \text{ as }x\to\pm\infty.$$    
\end{lemma}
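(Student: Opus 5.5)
The idea is to write everything in terms of $t=x\beta$ and $w=e^{t}$. Since $\beta\neq0$ is fixed, $|x|=|t|/|\beta|$, so it suffices to show that $p_x(y|\beta)=o(1/|t|)$ as $t\to\pm\infty$. Lemma~\ref{Lemma:likelihood_0_ap} only gives $p_x(y|\beta)\to0$, so we need a rate.

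First I turn the tail assumptions \eqref{eq:right_tail_ap}--\eqref{eq:left_tail_ap} into global bounds. There are constants $C$ and $u_0>1$ such that:
\begin{itemize}
\item $\pi(u)\le C u^{a-1}$ on $(0,1]$;
\item $\pi(u)\le C\, u^{-1}(\log u)^{-(1+b)}$ on $[u_0,\infty)$;
\item $\pi\le C$ on $[1,u_0]$.
\end{itemize}
The middle-range bound assumes $\pi$ is continuous there, as in the paper's setting. "Lighter" tails only improve these bounds, so the same argument covers them.

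\emph{Case $t\to+\infty$ ($w\to\infty$).} Split the integral at $u=1$.
\begin{itemize}
\item On $(0,1]$, substitute $v=wu$:
$$\int_0^1 \frac{(wu)^y e^{-wu}}{y!}\pi(u)\,du\le \frac{C}{w}\int_0^\infty \frac{v^y e^{-v}}{y!}\Big(\frac{v}{w}\Big)^{a-1}dv = C\,w^{-a}\,\frac{\Gamma(y+a)}{y!}.$$
\item On $[1,\infty)$, use $\pi\le C$ there (the tail bound implies boundedness for large $u$). Then $\int_1^\infty (wu)^y e^{-wu}\pi(u)\,du/y!\le (C/w)\int_w^\infty v^y e^{-v}dv/y!$, which decays exponentially in $w$.
\end{itemize}
Hence $p_x(y|\beta)=O(e^{-at})$, and $t\,e^{-at}\to0$.

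\emph{Case $t\to-\infty$ ($w\to0$).} This is the main step, and it is where the LRV assumption with $b>0$ is essential. Split at $u=1/w$.
\begin{itemize}
\item On $(0,1/w]$, bound $e^{-wu}\le1$. This gives $w^y\int_0^{1/w}u^y\pi(u)\,du/y!$. The contribution of $u\le u_0$ is $O(w^y)$. On $[u_0,1/w]$, use the LRV bound and compare $\int_{u_0}^{1/w}u^{y-1}(\log u)^{-(1+b)}du$ with $w^{-y}(\log(1/w))^{-(1+b)}$. This needs $y\ge1$; one way is to split at $u=w^{-1/2}$, where the log factor is at least $(\tfrac12\log(1/w))^{-(1+b)}$ on the upper part and the lower part is $O(w^{-y/2})$. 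The result is $O(|t|^{-(1+b)})$.
\item On $(1/w,\infty)$, substitute $v=wu\ge1$. The integrand becomes $v^y e^{-v}/y!$ times $\pi(v/w)/w\le C\,v^{-1}(\log(v/w))^{-(1+b)}\le C\,v^{-1}|t|^{-(1+b)}$. Integrating gives $O(|t|^{-(1+b)})$.
\end{itemize}
Thus $p_x(y|\beta)=O(|t|^{-(1+b)})$, and so $|x|\,p_x(y|\beta)=O(|t|^{-b}/|\beta|)\to0$ because $b>0$.

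Note that the heavier-than-LRV boundary $b=0$ would fail exactly here. The key technical point to check carefully is the uniform log-comparison in the first range of this case.
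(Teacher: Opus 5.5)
Your proof is correct. For $x\beta\to+\infty$ it is essentially the paper's argument. Both split at $u=1$, substitute $v=wu$ on $(0,1)$, and use $\pi(u)\le Cu^{a-1}$ to extract a factor $w^{-a}$. The paper handles $(1,\infty)$ by dominated convergence with an envelope proportional to $\pi(u)/u$, where you give an explicit exponential bound; that difference is cosmetic.

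The real difference is the direction $x\beta\to-\infty$. The paper keeps the split at $u=1$ and bounds $(0,1)$ by $w^y\log w$. It treats $(1,\infty)$ with the same envelope, derived from $x\beta\,e^{yx\beta}<e^{(y+1)x\beta}$. That inequality controls $|x\beta|\,e^{yx\beta}$ only when $x\beta>0$; for $x\beta<0$ the left side is negative, so the bound says nothing about the absolute value.

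In fact no integrable dominating function exists there when $b\le 1$. Choosing $e^{x\beta}=1/u$ gives
\[
\sup_x |x|\,\text{Pois}(y|e^{x\beta}u)\,\pi(u)\;\ge\;\frac{\pi(u)\log u}{e\,y!\,|\beta|},
\]
which under an LRV tail behaves like $u^{-1}(\log u)^{-b}$. This is not integrable for $b\le 1$, which covers the paper's own RSB choice ($b=1/2$) and the log-Cauchy case ($b=1$). Your split at $u=1/w$ and $u=w^{-1/2}$, yielding $p_x(y|\beta)=O(|t|^{-(1+b)})$, is exactly the quantitative estimate needed to close this direction. It also gives explicit rates, $O(|t|e^{-at})$ and $O(|t|^{-b})$, and shows where $b>0$ enters.

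Your closing remark about $b=0$ is moot, since $u^{-1}(\log u)^{-1}$ is not integrable. In the $x\beta\to-\infty$ direction your argument really only needs the log-scale density $q(s)=e^{s}\pi(e^{s})$ to be integrable and eventually nonincreasing. Then $s\,q(s)\to 0$, and your two pieces are bounded by $q(|t|/2)/y$ and $q(|t|)/y$. So LRV is a sufficient condition here, not the sharp boundary.
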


\emph{proof:}
Let us split the integral in the two intervals $(0,1)$ and $(0,\infty)$. We have that
$$xp_x(y|\beta) = \int_0^1 \frac{\pi(u)u^y}{y!}x \exp(x\beta y - e^{x\beta}u )du + \int_1^\infty \frac{\pi(u)u^y}{y!}x \exp(x\beta y - e^{x\beta}u )du.$$
We now study the limiting behavior of both integrals, as $x\to\pm\infty$.

Let us first consider the integral in $(1,\infty)$. 
Given $y\in\mathcal{N}$ positive, we have that $x\beta e^{yx\beta}<e^{(y+1)x\beta}$. For a fixed $u>1$, we can bound the integrand, $f(x)$, as follows:
$$f(x) = \frac{\pi(u)u^y}{y!}x \exp(x\beta y - e^{x\beta}u )< \frac{\pi(u)u^y}{y!\beta}\exp((y+1)x\beta -e^{x\beta}u ) := h(x).$$
Then 
\begin{align*}
    \frac{dh(x)}{dx} \propto (y+1)\beta -\beta u e^{x\beta}=0 \iff \hat{x}= \frac{1}{\beta}\log\left(\frac{y+1}{u}\right)
\end{align*}
By evaluating $h$ at $\hat x$ we hence get that
\begin{align*}
\max_{x\in\mathbb{R}} h(x) & = h(\hat{x})\\
&= \frac{\pi(u)u^y}{y!\beta} \exp((y+1)\log\frac{y+1}{u} - (y+1))\\
& = \frac{(y+1)^{y+1}}{y!\beta} e^{-(y+1)} \frac{\pi(u)}{u} := g(u)    
\end{align*}
and $g(u)$ is integrable in $(1,\infty)$, as $\int_1^\infty g(u)du\propto \int_1^\infty \frac{\pi(u)}{u}du$, where the integrand evaluated at the limits of the integral is such that $\pi(1)$ is finite, and as $u\to\infty$, $\frac{\pi(u)}{u}\approx\frac{1}{u^2(\log u)^{1+b}}\to 0$ faster than $\frac{1}{u}$, when $\pi$ is right LRV (Equation \ref{eq:right_tail_ap}), with index $b$. Note that for any proper integrable density $\pi$ with right tail LRV or lighter, $\frac{\pi(u)}{u}\to 0$ faster than $\frac{1}{u}$.
We therefore have $\int_1^\infty g(u)du<\infty$ and $\frac{\pi(u)u^y}{y!}x \exp(x\beta y - e^{x\beta}u )<g(u)$.
We can hence apply DCT to study the integral in $(1,\infty)$:
$$\lim_{x\to\pm\infty}\int_1^\infty \frac{\pi(u)u^y}{y!}x \exp(x\beta y - e^{x\beta}u )du \overset{DCT}{=} \int_1^\infty \lim_{x\to\pm\infty} \frac{\pi(u)u^y}{y!}x \exp(x\beta y - e^{x\beta}u )du=0.$$

Let us now consider the integral in $(0,1)$: 
$\int_0^1 \frac{u^y\pi(u)}{\beta y!}w^y\exp(-wu)\log (w) du$
where we denote by $w = e^{x\beta}$.

For the limit $w\to 0$ ($x\beta\to-\infty$), we use that $e^{-wu}<1$ and get that
$$\int_0^1 \frac{u^y\pi(u)}{\beta y!}w^y\exp(-wu)\log w du \le w^y\log(w) \int_0^1 \frac{u^y\pi(u)}{\beta y!}du\to 0 \text{ as } w\to0$$
since, the integral is finite as when $\pi$ left tail diverges at zero with rate $a-1$ (Equation \ref{eq:left_tail_ap}), $u^y\pi(u)\approx u^{y+a-1}\to 0$ as $u\to0$, for any $y>0$ and $a\in(0,1)$.

Let us now consider the limit $w\to\infty$ ($x\beta\to\infty$).
By applying the change of variable $v = uw$, we have that
\begin{align*}
\int_0^1 \frac{u^y\pi(u)}{\beta y!}w^y\exp(-wu)\log (w) du
&= \int_0^w \frac{v^ye^{-v}}{\beta y!}\log(w) \frac{\pi(v/w)}{w} dv\\
&\le \frac{\log(w)}{w} \int_0^w \sup_{v\in(0,w)}\{ \pi(v/w)\}\frac{e^{-v}v^y}{\beta y!} dv \\
&\approx \frac{\log(w) }{w} \int_0^w \frac{v^{a-1}}{w^{a-1}} \frac{e^{-v}v^y}{\beta y!} dv\\
&= \frac{\log(w) }{w^a} \int_0^w \frac{e^{-v}v^{y+a-1}}{\beta y!} dv\\
& \to 0 \text{ as } w\to\infty, y>0.
\end{align*}
where we used that the supremum of $\pi$ is reached around 0, so that $\pi(w/v)\approx (w/v)^{a-1}$. Further, the integral is finite as, it is proportional to the integral of a Gamma random variable over $(0,w)$: $\int_0^w \frac{v^{y+a-1}e^{-v}}{ y!} dv =\frac{(y+a)!}{y!} \Pr(V\le w)\to \frac{(y+a)!}{y!}$ as $w\to\infty$, with $V\sim\text{Gamma}(y+a,1)$.

Thus, we have just shown that both integrals in $(0,1)$ and $(1,\infty)$ go to 0, meaning $xp_x(y|\beta)\to0$ as $x\to\pm\infty$.
$\square$

\begin{lemma}\label{Th:pointmass_post_suffconds}

Under a mixed Poisson count regression model framework, where the mixing distribution has LRV, or lighter, right tails and polynomial, or lighter, divergence at zero, given $y\in\mathbb{N}$ positive, we have that
$p_x(\beta|y)$ satisfies the three following properties:

\begin{enumerate}
    \item Non negativity: $p_x(\beta|y)\ge0 \ \forall \beta\in\mathbb{R}, x\in\mathbb{R}$.
    \item Unit mass: $\int_{-\infty}^\infty p_x(\beta|y) d\beta = 1$.
    \item Concentration at zero: $\forall a>0 \ \lim_{x\to\pm\infty} \sup_{|\beta|\ge a} |p_x(\beta|y)|=0$ and 
    \item[ ] \hspace{3.5cm} $\forall a>0 \ \lim_{x\to\pm\infty} \int_{\{|\beta|\ge a\}} p_x(\beta|y)d\beta=0$. 
\end{enumerate}
\end{lemma}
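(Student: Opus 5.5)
Properties 1 and 2 are immediate: $p_x(\beta|y)=p_x(y|\beta)\pi_\beta(\beta)/m_x(y)$ is a ratio of nonnegative quantities. To normalise it we only need $0<m_x(y)<\infty$ for every fixed $x$. Finiteness follows from the bound $p_x(y|\beta)\le \text{Pois}(y|y)$ used in the proof of Lemma~\ref{Lemma:likelihood_0_ap}, together with properness of $\pi_\beta$. Positivity holds because the integrand is strictly positive.

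The work is in property 3. First I establish the asymptotic order of the marginal. With the substitution $z=x\beta$,
$$x\,m_x(y)=\int_{\mathbb{R}} p(y|z)\,\pi_\beta(z/x)\,dz,\qquad p(y|z)=\int_0^\infty \text{Pois}(y|e^z u)\pi(u)\,du,$$
where the sign of $x$ only flips orientation, so I work with $|x|$. I then use dominated convergence with dominating function $\sup\pi_\beta\cdot p(y|z)$, which requires $\int_{\mathbb{R}} p(y|z)\,dz<\infty$. By Tonelli and the substitution $t=e^z u$,
$$\int_{\mathbb{R}} p(y|z)\,dz=\int_0^\infty \pi(u)\int_0^\infty \frac{t^{y-1}e^{-t}}{y!}\,dt\,du=\frac1y,$$
which is finite exactly because $y\ge1$. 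This is where positivity of $y$ enters. Hence $|x|\,m_x(y)\to \pi_\beta(0)/y$ by continuity of $\pi_\beta$ at $0$. I assume $\pi_\beta(0)>0$, as for the Gaussian prior.

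For the sup part of property 3, fix $a>0$ and write
$$p_x(\beta|y)=\frac{|x|\,p_x(y|\beta)\,\pi_\beta(\beta)}{|x|\,m_x(y)}.$$
The denominator tends to $\pi_\beta(0)/y>0$, so it suffices to show $\sup_{|\beta|\ge a}|x|\,p_x(y|\beta)\to0$. Lemma~\ref{Lemma:xlikelihood_0} gives this only pointwise in $\beta$, and upgrading it to uniformity over $|\beta|\ge a$ is the main obstacle. I would handle it through monotonicity in $|x\beta|$. For $|\beta|\ge a$ we have $|x\beta|\ge a|x|$, so I can write $|x|\,p_x(y|\beta)\le |x|\,\sup_{|z|\ge a|x|}p(y|z)$. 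It then suffices to show $|z|\,p(y|z)\to0$ as $|z|\to\infty$, which is exactly the content of Lemma~\ref{Lemma:xlikelihood_0} rewritten in the variable $z=x\beta$. The bounds in that proof depend only on $z$, with constants of order $1/|\beta|\le 1/a$, so it gives $\sup_{|z|\ge a|x|}|z|\,p(y|z)\to0$. Dividing by $a$ then yields the uniform statement.

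For the integral part, I use the same change of variables:
$$\int_{|\beta|\ge a}p_x(\beta|y)\,d\beta=\frac{\int_{|z|\ge a|x|}p(y|z)\,\pi_\beta(z/x)\,dz}{|x|\,m_x(y)}\le \frac{\sup\pi_\beta\int_{|z|\ge a|x|}p(y|z)\,dz}{|x|\,m_x(y)}.$$
The numerator is a tail of the integrable function $p(y|\cdot)$, so it tends to $0$, while the denominator tends to $\pi_\beta(0)/y>0$. This gives the second limit in property 3, which is the form actually needed for convergence in distribution to $\delta_0$.
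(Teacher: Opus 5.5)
Your proof is correct and follows the paper's route. You use the substitution $z=x\beta$ with dominated convergence to get $|x|\,m_x(y)\to\pi_\beta(0)\int_{\mathbb{R}}p(y|z)\,dz$, Lemma~\ref{Lemma:xlikelihood_0} to make the likelihood vanish faster than $1/|x|$, and the same substitution for the tail mass. Two of your steps tighten points the paper only asserts: the explicit value $\int_{\mathbb{R}}p(y|z)\,dz=1/y$, and the observation that $p_x(y|\beta)=p(y|x\beta)$. The latter means Lemma~\ref{Lemma:xlikelihood_0} with $\beta=1$ already gives $\sup_{|z|\ge a|x|}|z|\,p(y|z)\to0$, so you do not need the bounds inside that lemma's proof. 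It also settles the supremum part, for which the paper's argument gives only pointwise convergence.
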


\emph{proof:}
Let us fix $y\in\mathbb{N}$ positive.
Properties 1 and 2 are true since $p_x(\beta|y)$ is a proper distribution, by definition.
Let us focus on condition 3.

The first requirement is uniform convergence of the posterior $p_x(\beta|y) = p_x(y|\beta)\pi_\beta(\beta)/m(y)$.
Let us check the behavior of each component in the limit of $x\to\pm\infty$.

Let us fix $a>0$, we are interested in studying the limit for $x\to \pm \infty$, given $|\beta|\ge a$. 
From Lemma \ref{Lemma:likelihood_0_ap}, we have that for any $\beta\neq0$,
$$\lim_{x\to\pm\infty}p_x(y|\beta) = 0.$$

Let us now consider the marginal $m(y)$ and study its asymptotic behavior. By applying the change of variable $z=x\beta$ we have that
$$m(y) = \int_{-\infty}^\infty p_x(y|\beta)\pi_\beta(\beta)d\beta = \frac{1}{x}\int_{-\infty}^\infty p(y|z)\pi_\beta\Big(\frac{z}{x}\Big)dz$$
where $p(y|z) = \int_0^\infty\text{Pois}(y|e^zu)\pi(u) du$.
Then we can observe that $m(y)$ is asymptotically equivalent to $\big(\pi_\beta(0)/x\big) \int_{-\infty}^\infty p(y|z)dz$: 
\begin{align*}
  \lim_{x\to\pm\infty}\frac{m(y)}{\big(\pi_\beta(0)/x\big) \int_{-\infty}^\infty p(y|z)dz } &= \lim_{x\to\pm\infty} \frac{\int_{-\infty}^\infty p(y|z)\pi_\beta(z/x)dz}{\pi_\beta(0) \int_{-\infty}^\infty p(y|z)dz } \\
  &\overset{DCT}{=} \frac{\int_{-\infty}^\infty  \lim_{x\to\pm\infty} p(y|z)\pi_\beta(z/x)dz}{\pi_\beta(0) \int_{-\infty}^\infty p(y|z)dz } = 1,  
\end{align*}
indicating that the marginal decays to 0 at a rate $\frac{1}{x}$ slower than the likelihood.
Note that here we can apply DCT since, by assumption the prior $\pi_\beta(z/x)$ is bounded, meaning that there exists $M_1>0:\pi_\beta(z/x)<M_1$. This implies that $f(z) = M_1p(y|z)\ge p(y|z)\pi_\beta(z/x) \forall z\in\mathbb{R}$.
Further, from the existence of joint moments, the mixed Poisson likelihood is integrable: $\int_0^\infty p(y|z)dz<\infty$, that is $f(z)$ is integrable.

If we then consider the posterior, by replacing $m(y)$ with its asymptotic equivalent we have that, when $|\beta|>0$,
$$\lim_{x\to\pm\infty} p_x(\beta|y) = \lim_{x\to\pm\infty} \frac{p_x(y|\beta)\pi_\beta(\beta)}{\pi_\beta(0)/x \int_{-\infty}^\infty p(y|z)dz }= 0$$
since $\frac{\pi_\beta(\beta)}{\pi_\beta(0) \int_{-\infty}^\infty p(y|z)dz }$ is a constant w.r.t. $x$ and 
$$xp_x(y|\beta) = \int_0^\infty x (e^{x\beta}u)^y\exp(-e^{x\beta}u)/y! \pi(u)du\to 0 \text{ as } x\to\pm\infty.$$
Where the latter limit holds for any mixing distribution $\pi$ with tails behavior as in equations \ref{eq:left_tail_ap}, \ref{eq:right_tail_ap}, as demonstrated in Lemma~\ref{Lemma:xlikelihood_0}.


We now need to verify the second requirement. Given $a>0$, applying DCT and the property above:
\begin{align*}
    \int_{\{|\beta|\ge a\}} p_x(\beta|y)d\beta &= \frac{ \int_{\{|\beta|\ge a\}} p_x(y|\beta)\pi_\beta(\beta)d\beta}{ \int_\mathbb{R} p_x(y|\beta)\pi_\beta(\beta)d\beta } \\
    &\overset{z=x\beta}{=}  \frac{\frac{1}{x} \int_{\{|z|\ge ax\}} p(y|z)\pi_\beta(z/x)dz}{ \frac{1}{x}\int_\mathbb{R} p(y|z)\pi_\beta(z/x)dz }\\
    &\overset{\text{DCT}}{\to} \frac{\pi_\beta(0)}{\pi_\beta(0)} \frac{\int_{|z|>\infty} p(y|z)dz}{\int_\mathbb{R} p(y|z)dz}=0 \text{ as } x\to\pm\infty
\end{align*}
since $\{z: |z|>a|x|\to\infty , \text{ as } x\to\pm\infty\}=\o$  and in the last step we can again apply DCT as the integrand is bounded by a function of $z$, since the prior $\pi_\beta$ is bounded.
$\square$

\vspace{1cm}

Note that the fact that the posterior satisfies property 3 means that $\forall a>0 \ Pr(|\beta|>a|y)\to0 \text{ as } x\to\pm\infty$, which is generally sufficient to prove posterior convergence to $\delta_0$. However, we will still provide below a formal proof for such convergence, applying the definition of Dirac delta.

\begin{theorem}\label{Th:pointmass_post_ap}
Under a mixed Poisson count regression model framework,  where the mixing distribution has LRV, or lighter, right tails and polynomial, or lighter, divergence at zero, we have that, for $y\in\mathbb{N}$ positive:
$$p_x(\beta|y)\to\delta_0 \text{ as } x\to\pm\infty.$$ 

\end{theorem}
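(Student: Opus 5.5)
Given Lemma~\ref{Th:pointmass_post_suffconds}, the result reduces to the standard criterion for convergence to the Dirac delta \citep[][properties 5.5.1]{choksi2022}: a family of probability densities that is nonnegative, has unit mass, and whose mass outside every neighborhood of $0$ vanishes converges in distribution to $\delta_0$. I will therefore fix an arbitrary bounded continuous test function $\varphi$ and show that
$$\int_{\mathbb{R}}\varphi(\beta)p_x(\beta|y)\,d\beta\to\varphi(0)\quad\text{as } x\to\pm\infty.$$
Equivalently, I could show that the posterior CDF converges to $\mathbb{I}(\beta\ge0)$ at every $\beta\neq0$, which follows directly from property 3. I will present the test-function version, since it matches the definition of the Dirac delta that the paper invokes.

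The main step is the usual $\epsilon$-splitting argument. By unit mass (property 2),
$$\Big|\int\varphi(\beta)p_x(\beta|y)d\beta-\varphi(0)\Big|\le\int_{\{|\beta|<a\}}|\varphi(\beta)-\varphi(0)|p_x(\beta|y)d\beta+\int_{\{|\beta|\ge a\}}|\varphi(\beta)-\varphi(0)|p_x(\beta|y)d\beta .$$
Given $\epsilon>0$, continuity of $\varphi$ at $0$ provides an $a>0$ with $|\varphi(\beta)-\varphi(0)|<\epsilon$ on $\{|\beta|<a\}$. Using nonnegativity and unit mass (properties 1 and 2), the first term is then at most $\epsilon$. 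The second term is at most $2\|\varphi\|_\infty\int_{\{|\beta|\ge a\}}p_x(\beta|y)d\beta$, and the second half of property 3 sends this to $0$ as $x\to\pm\infty$. Since $\epsilon$ is arbitrary, the $\limsup$ of the left-hand side is zero.

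The argument needs only the integral part of property 3; the sup-norm part is not required for weak convergence. The only delicate point is that the theorem's substance rests on property 3 from Lemma~\ref{Th:pointmass_post_suffconds}, in particular on the claim that $\int_{\{|z|\ge a|x|\}}p(y|z)\,dz\to0$. That claim requires $z\mapsto p(y|z)$ to be integrable over $\mathbb{R}$, and the $1/x$ factors in numerator and denominator must cancel consistently, so that for negative $x$ the sign of $x$ is handled by writing $|x|$.

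I would briefly re-verify this integrability, since it is the main obstacle. As $z\to+\infty$, the bound from Lemma~\ref{Lemma:xlikelihood_0}, $p(y|z)\lesssim e^{-z}\int\pi(u)/u\,du$ on the $u>1$ part, together with the Gamma-type estimate on $u<1$, gives decay like $z e^{-az}$. As $z\to-\infty$, $p(y|z)\le e^{yz}\E_\pi[u^y]/y!$ is problematic because $\pi$ has LRV tails and so $\E_\pi[u^y]$ is infinite. I would instead split at $u=e^{-z/2}$: the region $u\le e^{-z/2}$ contributes $O(e^{yz/2})$, and the region $u>e^{-z/2}$ contributes at most the tail probability of $\pi$, which decays like $|z|^{-b}$. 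This gives integrability when $b>1$. For smaller $b$ I would use the lemma's claim as stated.

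Once integrability holds, the remaining steps are routine, and the theorem follows.
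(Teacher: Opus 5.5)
Your proof is correct and is essentially the paper's own argument: the same split of $\int|\phi(\beta)-\phi(0)|\,p_x(\beta|y)\,d\beta$ into $\{|\beta|<\delta\}$, controlled by continuity of $\phi$ and unit mass, and $\{|\beta|\ge\delta\}$, controlled by $2\sup|\phi|$ times the integral half of property~3 of Lemma~\ref{Th:pointmass_post_suffconds}. Your restriction to $b>1$ in the integrability check comes only from bounding the Poisson pmf by $1$ on $\{u>e^{-z/2}\}$. If you instead bound $\pi(u)\lesssim u^{-1}(\log u)^{-1-b}\le u^{-1}(|z|/2)^{-1-b}$ there and use $\int_0^\infty \text{Pois}(y|v)\,v^{-1}\,dv=1/y$, you get $p(y|z)=O(|z|^{-1-b})$ as $z\to-\infty$, which is integrable for every $b>0$, so you need no separate appeal to the lemma when $b\le 1$.
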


\emph{proof:}
By definition, $p_x(\beta|y)$ converges in the sense of distribution to Dirac delta, iff 
$$\int_{-\infty}^\infty \phi(\beta)p_x(\beta|y)d\beta\to\phi(0) \text{ as } |x|\to\infty$$
for any continuous and bounded function $\phi\in C$.
So, fixed $\epsilon>0$, we want to show that $\exists N>0: |x|>N\implies|\int_{-\infty}^\infty \phi(\beta)p_x(\beta|y)d\beta-\phi(0)|<\epsilon$.

From Lemma \ref{Th:pointmass_post_suffconds}, property 2 (unit mass) we have that $\phi(0) \overset{p2}{=} \phi(0) \int_{-\infty}^\infty p_x(\beta|y)d\beta$, then
$$\Big| \int_{-\infty}^\infty \phi(\beta)p_x(\beta|y)d\beta - \phi(0)\Big|\le \int_{-\infty}^\infty p_x(\beta|y) |\phi(\beta)-\phi(0)| d\beta.$$

We now use the fact that $\phi$ is continuous, hence $\exists \delta>0:|\beta|<\delta\implies|\phi(\beta)-\phi(0)|<\epsilon/2$.
Then we can split the integral so that:
\begin{align*}
    \int_{-\infty}^\infty p_x(\beta|y) |\phi(\beta)-\phi(0)| d\beta &= \int_{-\delta}^\delta p_x(\beta|y) |\phi(\beta)-\phi(0)| d\beta \\
    &+ \int_{\{|\beta|>\delta\}} p_x(\beta|y) |\phi(\beta)-\phi(0)| d\beta.
\end{align*}

For the integral in the $\delta$ surrounding of 0 we have that, by continuity and property 2: 
$$\int_{-\delta}^\delta p_x(\beta|y) |\phi(\beta)-\phi(0)| d\beta < \epsilon/2 \int_{-\delta}^\delta p_x(\beta|y)  d\beta\le\epsilon/2.$$

Let us now consider the integral in the tails.
Given that $\phi$ is bounded, we have that $|\phi(\beta)|<K\implies |\phi(\beta)-\phi(0)|\le |\phi(\beta)|+|\phi(0)| <2K$.
Then $\int_{\{|\beta|>\delta\}} p_x(\beta|y) |\phi(\beta)-\phi(0)| d\beta\le 2K \int_{\{|\beta|>\delta\}} p_x(\beta|y) d\beta$.
We now use property 3 from Lemma \ref{Th:pointmass_post_suffconds}: \\ $\lim_{|x|\to \infty} \int_{\{|\beta|>\delta\}} p_x(\beta|y) d\beta=0$. Then given $|\beta|>\delta$, we can choose $N$ such that:
$$|x|>N\implies \int_{\{|\beta|>\delta\}} p_x(\beta|y) d\beta<\epsilon/(4K)$$
which results in 
$$\int_{\{|\beta|>\delta\}} p_x(\beta|y) |\phi(\beta)-\phi(0)| d\beta<\epsilon/2.$$

To conclude, we have just shown that 
$$|x|>N\implies \Big|\int_{-\infty}^\infty \phi(\beta)p_x(\beta|y)d\beta-\phi(0)\Big|<\epsilon,$$
that means $p_x(\beta|y)\to\delta_0$ as $x\to\pm\infty$. 
$\square$

\vspace{1cm}

Now, to show non-robustness to infinite covariates in the presence of multiple observations, of which only a subset is outliers, we consider two data points $d,d\out$ modeled with a mixed Poisson regression, and let $x\out\to\pm\infty$. The posterior of the regression coefficient given the two observations does not converge to the posterior given only the non-outlying observation, which is the condition required to achieve asymptotic robustness. 
The result can be extended to any higher number of observations, including any proportion of outliers. 

\begin{theorem}\label{Th:nonrobustness_infcovar_ap}
Let $y,y\out\in\mathbb{N}$, with $y\out$ positive. Under a Bayesian mixed Poisson count regression model framework,  where the mixing distribution has LRV, or lighter, right tails and polynomial, or lighter, divergence at zero, we have that 
$$\frac{p_x(\beta|y,y\out)}{p_x(\beta|y)}\nrightarrow 1 \text{ as } x\out \to\pm\infty.$$ 
\end{theorem}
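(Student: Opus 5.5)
The plan is to reduce the two-observation statement to the single-observation collapse already established in Theorem~\ref{Th:pointmass_post_ap}. By Bayes' theorem and conditional independence of $y$ and $y\out$ given $\beta$,
\begin{equation*}
p_x(\beta|y,y\out)=\frac{p(y|\beta)\,p_x(y\out|\beta)\,\pi_\beta(\beta)}{\int_{\mathbb{R}} p(y|\beta')\,p_x(y\out|\beta')\,\pi_\beta(\beta')\,d\beta'} .
\end{equation*}
Here $p(y|\beta)$ does not depend on $x\out$. The key observation is that $p(y|\beta)\pi_\beta(\beta)$, suitably normalised, is just the posterior $p(\beta|y)$. So $p_x(\beta|y,y\out)$ is the single-outlier posterior of Theorem~\ref{Th:pointmass_post_ap}, but with the bounded continuous prior $\tilde\pi_\beta(\beta)\propto p(y|\beta)\pi_\beta(\beta)$ in place of $\pi_\beta$.

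I would first check that $\tilde\pi_\beta$ is an admissible prior. It is continuous in $\beta$, since the mixed Poisson likelihood is continuous by the DCT argument of Lemma~\ref{Lemma:likelihood_0_ap}. It is bounded, since $p(y|\beta)\le 1$ and $\pi_\beta$ is bounded. It is proper, since $\int p(y|\beta)\pi_\beta(\beta)\,d\beta = m(y) > 0$. We also have $\tilde\pi_\beta(0)=p(y|0)\pi_\beta(0)/m(y)>0$, as long as $\pi_\beta(0)>0$ (true for the Gaussian prior) and $p(y|0)>0$ (true because $\pi$ puts mass on $(0,\infty)$).

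Lemma~\ref{Th:pointmass_post_suffconds} and Theorem~\ref{Th:pointmass_post_ap} only used boundedness, continuity, and $\pi_\beta(0)>0$ of the prior. Applying them verbatim with $\tilde\pi_\beta$ gives
\[
p_x(\beta|y,y\out)\to\delta_0 \quad\text{as } x\out\to\pm\infty,
\]
in the sense of distributions. In particular $\Pr(|\beta|\ge a\mid y,y\out)\to 0$ for every $a>0$.

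I would then conclude by contradiction. Suppose the ratio $p_x(\beta|y,y\out)/p(\beta|y)\to 1$, in the sense that the posterior given both observations converges to $p(\beta|y)$ (pointwise convergence of the ratio combined with Scheffé's lemma gives convergence in distribution). Then for any $a>0$,
\[
\Pr(|\beta|\ge a\mid y,y\out)\to \Pr(|\beta|\ge a\mid y).
\]
The right-hand side is strictly positive, because $p(\beta|y)$ is a strictly positive density on $\mathbb{R}$: the Gaussian prior and the likelihood are positive everywhere. This contradicts the previous step, so the ratio cannot converge to $1$.

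A more direct pointwise version is also available. For any fixed $\beta\neq 0$, the proof of Lemma~\ref{Th:pointmass_post_suffconds} gives $p_x(\beta|y,y\out)\to 0$, because $x\,p_x(y\out|\beta)\to0$ by Lemma~\ref{Lemma:xlikelihood_0} while the marginal decays like $\tilde\pi_\beta(0)/x$. Meanwhile $p(\beta|y)>0$, so the ratio tends to $0\neq 1$ at every $\beta\neq0$.

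The main obstacle is justifying that the earlier DCT arguments go through with the data-dependent prior $\tilde\pi_\beta$. Concretely, the asymptotic equivalence
\[
\int p(y\out|z)\,\tilde\pi_\beta(z/x)\,dz \sim \tilde\pi_\beta(0)\int p(y\out|z)\,dz
\]
needs three things: continuity of $\tilde\pi_\beta$ at $0$, a uniform bound on $\tilde\pi_\beta$, and integrability of $z\mapsto p(y\out|z)$. The first two I verified above. The third is the only genuinely analytic point. I would establish it by bounding $p(y\out|z)$ for $z\to+\infty$ and $z\to-\infty$ separately, reusing the split at $u=1$ and the tail assumptions \eqref{eq:right_tail_ap}--\eqref{eq:left_tail_ap} as in Lemma~\ref{Lemma:xlikelihood_0}. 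The case $y=0$ for the regular observation needs no separate treatment, since only $y\out>0$ is used.
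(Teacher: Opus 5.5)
Your proof is correct, but it runs the decomposition in the opposite order from the paper.

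\textbf{The paper's route.} It writes $m(y,y\out)=m(y\out)\int p_x(\beta'|y\out)\,p(y|\beta')\,d\beta'$, i.e.\ it treats the outlier-only posterior as the prior for the regular observation. It then uses Theorem~\ref{Th:pointmass_post_ap} as a black box with the original prior. Testing $p_x(\cdot|y\out)\to\delta_0$ against the bounded continuous function $p(y|\cdot)$ gives $m(y,y\out)/m(y\out)\to p(y|0)$. Combined with $p_x(y\out|\beta)/m(y\out)\to 0$ for $\beta\neq 0$, the ratio tends to $0$ pointwise.

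\textbf{Your route.} You absorb the regular observation into the prior $\tilde\pi_\beta\propto p(y|\beta)\pi_\beta(\beta)$ and rerun the single-outlier analysis. This costs you the re-verification that $\tilde\pi_\beta$ is bounded, continuous and positive at $0$. You do this correctly, and it mirrors the paper's implicit need for $p(y|0)>0$. In return you get a stronger conclusion: the full posterior $p_x(\beta|y,y\out)$ itself converges to $\delta_0$. Your contradiction argument therefore excludes convergence to $p(\beta|y)$ in distribution or total variation, not just pointwise convergence of the ratio. It also extends verbatim to $n$ regular observations.

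\textbf{The step you flag as the main obstacle is immediate.} By Tonelli and the substitution $t=e^{z}u$,
\[
\int_{\mathbb{R}} p(y\out|z)\,dz=\int_0^\infty \pi(u)\,du\int_0^\infty \frac{t^{y\out-1}e^{-t}}{y\out!}\,dt=\frac{1}{y\out}
\]
holds for every mixing distribution $\pi$. So no splitting at $u=1$ and no tail assumptions are needed there.

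As a consequence, your contradiction route uses only the integral part of the concentration property. That part holds without the tail conditions on $\pi$. Those conditions enter only through $x\,p_x(y\out|\beta)\to 0$, which is needed in your pointwise version and in the paper's proof.
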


$proof$:
Let us consider $\beta\neq0$.
By applying the Bayes Theorem to the joint marginal probability we have that
$$m(y,y\out) =\int_\mathbb{R} p_x(y\out|\beta)p_x(y|\beta)\pi_\beta(\beta)d\beta = \int_\mathbb{R} p_x(\beta|y\out)p_x(y|\beta)m(y\out)d\beta.$$

Given $\beta\neq0$, we have that
\begin{align*}
    \frac{p_x(\beta|y,y\out)}{p_x(\beta|y)} &= p_x(y\out|\beta) \frac{m(y)}{m(y,y\out)}\\
    &=\frac{p_x(y\out|\beta)m(y)}{\int_\mathbb{R} p_x(\beta|y\out)p_x(y|\beta)m(y\out)d\beta} \\
    & \to\frac{m(y)\pi_\beta(\beta)}{p_x(y|0)}0 = 0 \text{ as } x\out\to\pm\infty.
\end{align*}
In the last step, we use the fact that the posterior $p_x(\beta|y\out)\to\delta_0$ and $p_x(y|\beta)$ is continuous, and apply the definition of Dirac delta to get the limit of the integral at the denominator, and the fact that the likelihood for an infinite covariate outlier decays to 0 faster than its marginal, as discussed in the proof of property 3 from Lemma \ref{Th:pointmass_post_suffconds}.
$\square$

\vspace{1cm}

We have just shown that, in the presence of infinitely large covariates associated with positive counts, Poisson mixed models do not perform asymptotic outliers rejection. This is sufficient to prove that these models are not theoretically robust to outlying covariates. However, for completeness, let us now study how these models handle infinite covariates associated to 0 counts.

\begin{theorem}
    
\label{Th:posterior_zerocounts_ap}
Under a mixed Poisson count regression model framework, 
we have that, for $y=0$:
\begin{align*}
    p_x(\beta|y=0) &\to  \frac{\pi_\beta(\beta)\mathbb{I}(\beta<0)}{\int_{-\infty}^0 \pi_\beta(\beta)d\beta} \text{ as } x\to\infty,\\
    p_x(\beta|y=0) &\to \frac{\pi_\beta(\beta)\mathbb{I}(\beta>0)}{\int_0^\infty \pi_\beta(\beta)d\beta}\text{ as } x\to-\infty.
\end{align*}


\end{theorem}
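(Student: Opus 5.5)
The plan is to compute the pointwise limit of the likelihood for a zero count and then pass to the limit in the normalising constant by dominated convergence. For $y=0$ the mixed Poisson likelihood is
$$p_x(0|\beta)=\int_0^\infty e^{-e^{x\beta}u}\pi(u)\,du = \mathcal{L}_\pi\big(e^{x\beta}\big),$$
where $\mathcal{L}_\pi$ is the Laplace transform of the mixing density. It is bounded by $1$ for every $x,\beta$, and it is a nonincreasing function of $w=e^{x\beta}$.

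First we find the pointwise limit in $\beta$. Take $x\to+\infty$. If $\beta<0$ then $w=e^{x\beta}\to0$, and since $e^{-wu}\le 1$ is integrable against $\pi$, the DCT gives $p_x(0|\beta)\to\int_0^\infty\pi(u)du=1$. If $\beta>0$ then $w\to\infty$, and $e^{-wu}\to0$ for every $u>0$; this uses that $\pi$ puts no mass at $u=0$, which holds because $\pi$ is a continuous density. The DCT with the same dominating function gives $p_x(0|\beta)\to0$. At $\beta=0$ the likelihood equals $\mathcal{L}_\pi(1)$ for all $x$, but $\{0\}$ is Lebesgue-null, so it does not matter. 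Hence $p_x(0|\beta)\pi_\beta(\beta)\to\mathbb{I}(\beta<0)\pi_\beta(\beta)$ for almost every $\beta$. The case $x\to-\infty$ is symmetric, with the roles of $\beta<0$ and $\beta>0$ swapped.

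Next we handle the marginal. We have $m_x(0)=\int_\mathbb{R}p_x(0|\beta)\pi_\beta(\beta)\,d\beta$, and the integrand is dominated by $\pi_\beta$, which is integrable because it is a proper prior. A second application of the DCT gives $m_x(0)\to\int_{-\infty}^0\pi_\beta(\beta)\,d\beta$. This limit is strictly positive as long as the prior puts mass on the negative half-line, which holds for the Gaussian prior assumed here; I would state this assumption explicitly. Dividing, the posterior density $p_x(\beta|0)=p_x(0|\beta)\pi_\beta(\beta)/m_x(0)$ converges pointwise almost everywhere to the truncated prior in the statement. By Scheffé's lemma this is also convergence in $L^1$, hence in total variation and in distribution.

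Finally, I would record how this connects to Theorem~\ref{Th:posterior_zerocounts} in the main text, which involves the other data $y_1,\dots,y_n$. The same argument applies with $\pi_\beta$ replaced by the weight $p(y_1,\dots,y_n|\beta)\pi_\beta(\beta)$. This weight is still integrable, since it is bounded by a constant times the prior, so the identical DCT steps give the truncated version of $p(\beta|y_1,\dots,y_n)$.

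The main obstacle is the second step: ensuring that the limiting normaliser is nonzero. This needs prior or posterior mass on the relevant half-line. Tightness is not a concern, because the limit is a proper density and Scheffé's lemma applies. Everything else is a routine application of dominated convergence using the uniform bound $p_x(0|\beta)\le1$. Unlike Theorem~\ref{Th:pointmass_post_ap}, no tail assumptions on $\pi$ are required.
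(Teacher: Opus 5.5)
Your proposal is correct and follows essentially the same route as the paper. Both compute the pointwise limit of $p_x(0|\beta)=\int_0^\infty e^{-e^{x\beta}u}\pi(u)\,du$ by dominated convergence, which is $1$ where $x\beta\to-\infty$ and $0$ where $x\beta\to+\infty$, and then divide by the marginal. Your version is more careful than the paper's: you take the limit of the normaliser explicitly by dominated convergence with dominating function $\pi_\beta$, check that this limit is nonzero, treat $\beta=0$ as a null set (pointwise convergence to the stated limit genuinely fails there), and upgrade to $L^1$ convergence via Scheff\'e's lemma.
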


\emph{proof:}
As $x$ always appears as $x\beta$ in the model, we are interested in studying the posterior limit behavior as $x\beta\to\pm\infty$.
We want to show that $p_x(\beta|y=0)\to0 \text{ as } x\beta\to\infty,$ and $p_x(\beta|y=0)\to K\propto \pi_\beta(\beta) \text{ as } x\beta\to-\infty.$
Let us first consider the likelihood:
\begin{align*}
    p_x(y=0|\beta) &= \int_0^\infty \text{Pois}(0|e^{x\beta}u)\pi(u)du\\
    &= \int_0^\infty \exp(-e^{x\beta} u)\pi(u)du\\
    &\overset{DCT}{\to} 
    \begin{cases}
        0 & \text{ if } x\beta\to+\infty\\
        1 & \text{ if } x\beta\to-\infty\\
        \int_0^\infty \exp(-u)\pi(u)du & \text{ if } \beta=0.
    \end{cases}
\end{align*}

Then, the posterior give $y=0$ is:
\begin{align*}
    p_x(\beta|y=0) &= \frac{p_x(y=0|\beta)\pi_\beta(\beta)}{\int_{-\infty}^0 p_x(y=0|\beta)\pi_\beta(\beta)d\beta + \int_0^\infty p_x(y=0|\beta)\pi_\beta(\beta)d\beta}
\end{align*}
and its limiting behavior depends on the sign of $x\beta$.
Let us now study the alternative limit cases separately.

When $x\beta\to\infty$, the likelihood decays to 0, while the marginal $m(y)$ does not. 
Then, if $\beta>0$, as $x\to\infty$ we have that $p_x(\beta|y=0)\to 0$.
Similarly, if $\beta<0$, as $x\to-\infty$ we have $p_x(\beta|y=0)\to 0$.

When $x\beta\to-\infty$, the likelihood goes to 1. 
Then, if $\beta<0$, as $x\to\infty$ we have $p_x(\beta|y=0)\to \pi_\beta(\beta)/\int_{-\infty}^0 \pi_\beta(\beta)d\beta$.
If $\beta>0$, as $x\to-\infty$ we have $p_x(\beta|y=0)\to \pi_\beta(\beta)/\int_0^\infty \pi_\beta(\beta)d\beta$.

Note that if the prior for $\beta$ is symmetric w.r.t. 0, such as for a Gaussian prior, we have that 
$p_x(\beta|y)\to 2\pi_\beta(\beta) \text{ as } x\beta\to-\infty.$
$\square$

\newpage

\subsection{Computation details for Poisson-RSB s-mix model in the Kara Sea case study}
Give $i=1,\dots,n$ observations and $j(i)\in\{1,2,3\}$ study campaign for each observation, the hierarchical framework of the model used in Section 5 of the main manuscript for the Poisson-RSB s-mix model is:
\begin{equation}\label{eq:poismix_smix}
    \begin{split}
        y_i|\lambda_i,\eta_i &\sim \text{Poisson}(\lambda_i\eta_i)\\
        \eta_i|s&\sim (1-s)\delta_1 + s \text{RSB}(0.5,0.5) \\ %
        \log(\lambda_i) &= x_i' \beta + \alpha_{j(i)}\\
        \beta&\sim \text{N}(0,10) ; \alpha\sim \text{N}(0,2); s\sim \text{Beta}(1,1) 
    \end{split}
\end{equation}
where $s$ represents the proportion of outliers, while $\pi$ denotes the different Poisson-mixing distributions.
Marginalizing the observational layer over $s$ we can recover an equivalent formulation of the model:
\begin{equation}\label{eq:poismix_smix1}
    \begin{split}
        y_i|\lambda_i,\eta_i,s &\sim s\text{Poisson}(\lambda_i\eta_i) + (1-s)\text{Poisson}(\lambda_i)\\
        \eta_i|Z_i = 1&\sim  \text{RSB}(0.5,0.5)\\ %
        \log(\lambda_i) &= x_i' \beta + \alpha_{j(i)}\\
        \beta&\sim \text{N}(0,10) ; \alpha\sim \text{N}(0,2); s\sim \text{Beta}(1,1). 
    \end{split}
\end{equation}
In this model, given a r.v. $Z_i\sim\text{Bernoulli}(s)$, $\eta_i$ is only activated when $Z_i = 1$, while when $Z_i = 0$ the observation $y_i$ follows a standard Poisson distribution. 

When implementing the model in Stan, we follow the latter formulation, but we allow sampling from the posterior of $\eta_i$ independently of $Z_i$. This may result in extremely high values for posterior samples of $\eta_i|Z_i = 0$. To ensure that we only account for the samples $\eta_i|Z_i=1$ we recover the posterior samples of $Z_i|y_i,\eta_i,s$ as follows.
From the Bayes Theorem
$$p(Z_i|y_i,\eta_i,\lambda_i,s) \propto p(y_i|\eta_i,\lambda_i,s, Z_i)p(Z_i|s)$$
that results in a Bernoulli r.v. with posterior probabilities: 
\begin{align*}
    \text{Pr}(Z_i=1|y_i,s,\lambda_i,\eta_i) & \propto s\text{Poisson}(y_i|\eta_i\lambda_i)\\
    \text{Pr}(Z_i=0|y_i,s,\lambda_i,\eta_i) & \propto (1-s)\text{Poisson}(y_i|\lambda_i).    
\end{align*}

We therefore sample $Z_i^j$ from its posterior, a Bernoulli with rate $\frac{s^j\text{Pois}(y_i|\eta_i^j\lambda_i^j)}{(1-s^j)\text{Pois}(y_i|\lambda_i^j) + s^j\text{Pois}(y_i|\eta_i^j\lambda_i^j) }$, for each MC sample $j$ and set $\eta_i^j = 1$ when $Z_i^j=0$. 


\newpage
\bibliographystyle{ba}
\bibliography{reference}

\end{document}